\pdfoutput=1
\documentclass[11pt]{article}
\usepackage{fullpage}
\usepackage{mathtools,amsfonts,amsthm,amssymb,bbm}
\usepackage{xcolor}
\usepackage[backref,colorlinks,citecolor=blue,linkcolor=magenta,bookmarks=true]{hyperref}
\usepackage[nameinlink]{cleveref}
\usepackage[algo2e,ruled,linesnumbered]{algorithm2e}

\newcommand{\1}[1]{\mathbbm{1}\left[#1\right]} 

\newcommand{\argmin}{\operatorname*{argmin}}
\newcommand{\bbF}{\mathbb{F}}
\newcommand{\Binomial}{\mathsf{Binomial}} 

\newcommand{\Egood}{\mathcal{E}^{\mathsf{good}}}
\newcommand{\eps}{\epsilon}
\newcommand{\Ex}[2]{\operatorname*{\mathbb{E}}_{#1}\left[#2\right]} 
\newcommand{\F}{\mathcal{F}}
\newcommand{\N}{\mathcal{N}} 
\newcommand{\poly}{\mathrm{poly}}
\newcommand{\polylog}{\mathrm{polylog}}
\newcommand{\pr}[2]{\Pr_{#1}\left[#2\right]} 
\newcommand{\R}{\mathbb{R}} 
\newcommand{\rank}{\mathrm{rank}}
\newcommand{\rmd}{\mathrm{d}}
\newcommand{\sgn}{\mathrm{sgn}} 
\newcommand{\Span}{\mathrm{Span}}


\newtheorem{problem}{Problem}
\newtheorem{lemma}{Lemma}
\newtheorem{theorem}[lemma]{Theorem}
\newtheorem{definition}[lemma]{Definition}
\newtheorem{corollary}[lemma]{Corollary}

\newtheorem{conjecture}{Conjecture}
\newtheorem{remark}[lemma]{Remark}

\title{A Combinatorial Approach to Robust PCA}
\author{
Weihao Kong\thanks{Google Research.  Email: \texttt{weihaokong@google.com}.}
\and
Mingda Qiao\thanks{University of California, Berkeley. Email: \texttt{mingda.qiao@berkeley.edu}. Part of this work was done while the author was an intern at Google Research and a graduate student at Stanford University.}
\and
Rajat Sen\thanks{Google Research. Email: \texttt{senrajat@google.com}.}
}
\date{}

\begin{document}

\maketitle

\begin{abstract}
    We study the problem of recovering Gaussian data under adversarial corruptions when the noises are low-rank and the corruptions are on the coordinate level.
    Concretely, we assume that the Gaussian noises lie in an unknown $k$-dimensional subspace $U \subseteq \mathbb{R}^d$, and $s$ randomly chosen coordinates of each data point fall into the control of an adversary. This setting models the scenario of learning from high-dimensional yet structured data that are transmitted through a highly-noisy channel, so that the data points are unlikely to be entirely clean.

    Our main result is an efficient algorithm that, when $ks^2 = O(d)$, recovers \emph{every single} data point up to a nearly-optimal $\ell_1$ error of $\tilde O(ks/d)$ in expectation. At the core of our proof is a new analysis of the well-known Basis Pursuit (BP) method for recovering a sparse signal, which is known to succeed under additional assumptions (e.g., incoherence or the restricted isometry property) on the underlying subspace $U$. In contrast, we present a novel approach via studying a natural combinatorial problem and show that, over the randomness in the support of the sparse signal, a high-probability error bound is possible even if the subspace $U$ is \emph{arbitrary}.
\end{abstract}

\section{Introduction}
The flurry of recent work on algorithmic robust statistics studies various settings of learning and estimation in the presence of adversarial outliers (see e.g.,~\cite{DK19} for a survey). Techniques have been developed for designing algorithms that are provably correct and both sample- and computationally-efficient. In most of these settings, however, an important assumption is that the majority of the data points are clean, so that non-trivial estimation is possible at least information-theoretically. Alternative models such as list-decodable learning (\cite{BBV08,CSV17}), while allowing a majority of the data to be corrupted, still require the clean data to constitute a non-negligible portion of the dataset.

This work explores a setting of robust learning from high-dimensional and structured data that are densely corrupted. Concretely, for some $d \gg k$, each clean data point $x$ is a $d$-dimensional vector lying in a $k$-dimensional subspace $U \subseteq \R^d$. When $d$ is huge, each data point is no longer accurately modeled by an atomic object that is either entirely clean or fully under the control of the adversary. For instance, when the data are transmitted through a noisy channel, it is conceivable that only a small fraction of the data \emph{entries} get corrupted. Even if each entry gets corrupted independently with a small probability $\eps$, the fraction of entirely clean data points, $(1 - \eps)^d$, could still be exponentially small. Indeed, we will focus on the extreme case that \emph{every} data point may contain $s$ corrupted entries, for some $s \ll d$.

Unfortunately, even in a seemingly benign setup where both $k$ and $s$ are small, an all-powerful adversary might still prevent us from extracting any information from the data. For instance, when the subspace $U$ consists of the vectors supported on the first $k$ coordinates, an adversary that corrupts $s \ge k$ entries per sample can easily wipe out all the information. However, not all hope is lost---in this example, the critical reason for the adversary to succeed is that the corruptions are perfectly aligned with the informative entries in the data. In many practical scenarios (e.g., when the corruptions result from the noisy transmission channel), it is reasonable to assume that the \emph{locations} of corruptions are independent of the structure of the subspace $U$, so the worst-case scenario above is unlikely to happen. Concretely, we will assume in this work that the corruption on each data point is restricted to a randomly chosen subset of $s$ coordinates.

The model that we informally defined above is closely related to the vast body of literature on \emph{sparse recovery} and \emph{robust PCA}. When the clean data point lies in a \emph{known} subspace in our setup, the problem is equivalent to recovering the (additive) corruption, which is $s$-sparse, given a linear measurement of it; see Remark~\ref{remark:basis-pursuit} for details. This is exactly the problem of sparse recovery, though we note that most previous work focused on the structures of the linear measurements that enable accurate and efficient recovery, whereas we consider an alternative setup where the linear measurement is essentially arbitrary, but the support of the sparse vector is uniformly distributed. If we arrange the observed data points into a matrix, the problem can be rephrased as decomposing this matrix into the sum of a low-rank matrix (consisting of the clean data) and a sparse matrix (consisting of the additive corruptions). This is known as robust PCA, and prior work has identified various suites of assumptions---on both the incoherence of the low-rank component and the support of the sparse component---under which an accurate decomposition can be efficiently computed. We discuss the related work on these two problems in more detail in Section~\ref{sec:related}. 

\subsection{Problem Setup}

\paragraph{Notations.} $[n]$ is the set $\{1, 2, \ldots, n\}$. We denote by $\binom{S}{s}$ the family of all size-$s$ subsets of a set $S$. For vector $v$, $\|v\|_p$ is its $p$-norm when $p \ge 1$, and $\|v\|_0$ is the number of non-zero entries in $v$. For $v \in \R^d$ and $S \subseteq [d]$, $v|_S \in \R^{|S|}$ is the restriction of $v$ to the coordinates with indices in $S$. Similarly, for a matrix $A$ with $d$ columns, $A|_S$ is the restriction of $A$ to the columns indexed by $S \subseteq [d]$. The $\tilde O(\cdot)$ and $\tilde\Omega(\cdot)$ notations hide $\polylog(d)$ factors, where $d$ is the ambient dimension in the problem.

We first formally state the problem of recovering a vector lying in a \emph{known} low-dimensional subspace when several randomly chosen entries are corrupted. This will be the key step in the recovery of an entire dataset that lie in an unknown subspace, a problem that we define later.

\begin{problem}[Recovery of One Data Point]\label{prob:sparse}
We are given a $k$-dimensional subspace $U \subseteq \R^d$ in which the unknown $x$ lies. Set $S$ is chosen uniformly at random from $\binom{[d]}{s}$. The adversary produces the corrupted data point $\tilde x$ subject to:
\begin{itemize}
    \item Corruptions are restricted to $S$: $\tilde x_i = x_i$ for every $i \in [d] \setminus S$.
    \item Corruptions are bounded: $\|\tilde x - x\|_{\infty} \le B$.
\end{itemize}
Our goal is to output an estimate $\hat x$ of the clean data $x$ with a small error.
\end{problem}

Note that we may assume $B = 1$; otherwise we simply scale everything down by a factor of $B$. Also note that another natural corruption model---in which each coordinate is independently included in $S$ with probability $\eps$---corresponds to the case that $s \sim \Binomial(d, \eps)$ in Problem~\ref{prob:sparse}. Later, we will analyze an algorithm for Problem~\ref{prob:sparse} that is oblivious to the value of $s$, so the guarantee for such an algorithm immediately implies a guarantee under the independent corruption setup.

The boundedness assumption $\|\tilde x - x\|_{\infty} \le B$ in Problem~\ref{prob:sparse} arises naturally in the context of the following problem of recovering a Gaussian dataset. In this setup, we can easily obtain a crude estimate for each coordinate of the mean vector, and then safely disregard the data entries that are far from the estimated mean.

\begin{problem}[Recovery of Structured Data under Coordinate-Level Corruption]\label{prob:mean-est}
    The clean data $x^{(1)}$, $x^{(2)}$, $\ldots$, $x^{(n)}$ are independent samples from an unknown Gaussian distribution $\N(\mu, \Sigma)$, where $\rank(\Sigma) = k$. Sets $S^{(1)}, \ldots, S^{(n)}$ are chosen independently and uniformly at random from $\binom{[d]}{s}$. The adversary produces the corrupted data $\tilde x^{(1)}, \ldots, \tilde x^{(n)}$ such that $\tilde x^{(i)}_j = x^{(i)}_j$ for every $j \in [d]\setminus S^{(i)}$. Our goal is to recover the clean data given the corrupted data.
\end{problem}

Note that a solution to Problem~\ref{prob:mean-est} immediately allows us to post-process the recovered data and extract information about the underlying distribution (e.g., the mean $\mu$).

\subsection{Our Results}
We analyze the natural $\ell_1$ convex relaxation, termed the Basis Pursuit (BP) method by Chen, Donoho and Saunders~\cite{CDS01}, for Problem~\ref{prob:sparse}: Given the corrupted data point $\tilde x$ and subspace $U$, we solve the following optimization problem to obtain an estimate of $x$:
\begin{equation}\begin{split}\label{eq:basis-pursuit}
    \text{minimize} \quad&\|\hat x - \tilde x\|_1\\
    \text{subject to}\quad& \hat x \in U
\end{split}\end{equation}

Note that the boundedness parameter $B$ and the sparsity $s$ of the corruption are not used in~\eqref{eq:basis-pursuit}.

\begin{remark}\label{remark:basis-pursuit}
Basis pursuit is more often formulated as minimizing $\|x\|_1$ subject to $Ax = y$, where $x$ is a sparse signal, $A$ specifies a linear measurement, and $y$ is the observation from the measurement. The optimization above is equivalent to this more usual formulation: we may write $\delta = \tilde x - \hat x$ and let $A$ be the $(d - k) \times d$ matrix with row space $U^{\bot}$, the orthogonal complement of $U$. The constraint $\hat x \in U$ in \eqref{eq:basis-pursuit} is equivalent to $A(\tilde x - \delta) = 0$. Thus, Program~\eqref{eq:basis-pursuit} is equivalent to the minimization of $\|\delta\|_1$ subject to $A\delta = A\tilde x$.
\end{remark}

Our main result states that the BP program in~\eqref{eq:basis-pursuit}, while being a simple and natural relaxation of the sparsity objective, approximately denoises the data with high probability, even if no assumptions are made on the subspace $U$.

\begin{theorem}\label{thm:sparse-recovery}
    In the setup of Problem~\ref{prob:sparse} with $B = 1$, let $x^*$ be an optimum of the BP method. For any $t > 0$, over the randomness in the set of corrupted coordinates, $\pr{}{\|x^* - x\|_1 \ge t}$ is upper bounded by both
    \[
         \frac{1}{(\lfloor t/4\rfloor + 1)!}\cdot\left(\frac{12s^2k}{d}\right)^{\lfloor t/4\rfloor + 1}
    \]
    and
    \[
        \frac{24ks}{d}.
    \]
    Furthermore, assuming $s \le \sqrt{d / 2}$, the probability is also upper bounded by
    \[
         12k \cdot \left(\frac{2s}{d}\right)^{1 + \lfloor t/(48k)\rfloor}.
    \]
\end{theorem}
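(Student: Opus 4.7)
The plan is to reduce the failure event $\{\|x^* - x\|_1 \ge t\}$ to a purely combinatorial event about how the random set $S$ interacts with the $k$-dimensional subspace $U$, and then to bound the probability of this event by three tail inequalities tailored to the three claimed bounds.

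First, I would use BP optimality to extract a null-space-property--style condition. Since $x$ is feasible for the BP program and $x^*$ is optimal, $\|\delta - e\|_1 \le \|e\|_1$ where $\delta := x^* - x \in U$ and $e := \tilde x - x$ satisfies $\mathrm{supp}(e) \subseteq S$ and $\|e\|_\infty \le 1$. Splitting $\|\delta - e\|_1$ into its coordinates in $S$ and $\bar S$ and applying the reverse triangle inequality yields $\|\delta|_S\|_1 \ge \|\delta|_{\bar S}\|_1$, and in particular $\|\delta\|_1 \le 2\|e\|_1 \le 2s$. Hence the failure event implies the existence of a nonzero $v \in U$ (namely $v = \delta$) satisfying $\|v|_S\|_1 \ge \|v|_{\bar S}\|_1$ and $t \le \|v\|_1 \le 2s$. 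The upper bound $\|v\|_1 \le 2s$ is essential because the ``half of the mass lies in $S$'' condition alone is scale-invariant and cannot distinguish between values of $t$.

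The combinatorial core is to analyze this event. My approach would be to identify, for every bad $v$, a small combinatorial certificate extracted from the $k$-dimensional structure of $U$ and its overlap with $S$. Concretely, I would iteratively pick coordinates of $S$ that contribute a constant-sized amount to the $\ell_1$ mass of $v$ --- leveraging that $\|v\|_1 \le 2s$ is absolutely bounded --- while using $\dim U = k$ to argue that after $\Theta(t)$ picks the direction $v$ is essentially forced. The three bounds then arise from different quantifications of this argument. The bound $24ks/d$ comes from a single-coincidence Markov estimate: a fixed critical set of $O(k)$ coordinates, when not hit by $S$, forces $\delta = 0$, so the probability of at least one hit is $O(ks/d)$. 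The bound $\tfrac{1}{(\lfloor t/4\rfloor+1)!}(12 s^2 k / d)^{\lfloor t/4\rfloor+1}$ comes from demanding $\lfloor t/4\rfloor + 1$ independent pairing events, each of probability $\sim s^2 k / d^2$ (two coordinates of $S$ aligning with a $k$-indexed combinatorial structure from $U$), combined with the Poisson tail estimate $\Pr[\mathrm{Poisson}(\lambda) \ge r] \le \lambda^r / r!$ to produce the factorial denominator. The third bound, valid for $s \le \sqrt{d/2}$, is a sharper occupancy analysis in which each ``block'' of $48k$ units of error costs an additional factor $2s/d$ corresponding to one new coordinate of $S$ landing in the effective critical set; the hypothesis $s \le \sqrt{d/2}$ is what controls dependencies between successive blocks.

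The hardest step will be establishing the combinatorial certificate for \emph{arbitrary} $U$. Unlike standard BP analyses that rely on mutual incoherence, RIP, or explicit null-space properties of $U$, here $U$ is completely unstructured, so the approximate-recovery guarantee must be extracted purely from the randomness of $S$. The novel combinatorial contribution hinted at in the introduction is presumably the construction of such a certificate whose complexity scales with $k + t$, together with a clean enumeration over certificates that does not incur an exponential union-bound penalty and that cleanly produces the three stated tail estimates.
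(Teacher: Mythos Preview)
Your necessary condition is too weak to carry any $t$-dependence. After normalizing $v$ to $\|v\|_1 = 1$, the condition $\|v|_S\|_1 \ge \|v|_{\bar S}\|_1$ is just the negation of RNP and is the same for all $t$; the side constraint $t \le \|v\|_1 \le 2s$ is scale-invariant in the existential formulation and adds nothing (any $v$ violating RNP can be rescaled to have $\ell_1$ norm anywhere in $(0,\infty)$). So the ``combinatorial certificate'' you plan to build from $\|v\|_1 \le 2s$ cannot distinguish $t=1$ from $t=s$. The paper extracts a genuinely $t$-dependent condition by a different route: instead of the triangle-inequality RNP step, it uses that $t = \|\delta\|_1$ minimizes $\alpha \mapsto \|\alpha u - \tilde x\|_1$ along the ray $\Span(u)$, $u = \delta/\|\delta\|_1$. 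This is a weighted-median condition, and since $|\tilde x_i| \le 1$, having the median at $t$ forces the supporting mass to lie on coordinates with $|u_i| \le 1/t$. The resulting notion of ``$t$-dominance'' --- $\sum_{i\in S}|u_i|\cdot\1{|u_i|\le 1/t} \ge 1/2$ --- is the correct necessary condition and is strictly stronger as $t$ grows.

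Your combinatorial plan is also missing the mechanism that actually uses $\dim U = k$. The paper shows that $\Theta(k)$ almost-disjoint $t$-dominant sets would yield, via their witnesses $u^{(i)}$, a $\Theta(k)\times\Theta(k)$ matrix that is simultaneously rank $\le k$ and (weakly) diagonally dominant; a probabilistic submatrix extraction plus Gershgorin gives a contradiction. Counting the dominant sets then goes through a packing/double-counting argument for the second and third bounds, and through Hall's theorem for the first. None of this resembles Poisson tails or ``pairing events of probability $\sim s^2k/d^2$''. Finally, your explanation of the $24ks/d$ bound --- a fixed set of $O(k)$ pivotal coordinates whose avoidance forces $\delta=0$ --- is an argument for the linear-system estimator, not for BP; for BP there is no such fixed set when $U$ is arbitrary, and the $24ks/d$ bound instead falls out of the packing argument with the trivial packing of $\lfloor d/s\rfloor$ disjoint blocks.
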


The second bound above holds uniformly for all $t > 0$. The third bound is sharper for larger values of $t$: assuming $s = O(\sqrt{d})$, the probability of incurring error $\ge t$ is at most $O(k)\cdot(2s/d)^{\Omega(t/k)}$, which decays exponentially as $t$ grows. The first bound, in comparison, removes the $1/k$ factor in the exponent at the cost of a slightly larger base in the exponentiation.

An easy calculation (deferred to Appendix~\ref{sec:expcted-error}) gives an upper bound on the expected $\ell_1$ error.
\begin{corollary}[Expected error]\label{cor:expected-error}
    In the setup of Problem~\ref{prob:sparse}, the expected $\ell_1$ error of the BP method is $\tilde O\left(\frac{Bks}{d}\cdot\max\left\{\frac{ks^2}{d}, 1\right\}\right)$.
\end{corollary}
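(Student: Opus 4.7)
The plan is to derive the expected $\ell_1$ error by integrating the two tail probability bounds from Theorem~\ref{thm:sparse-recovery} via the identity $\mathbb{E}[X] = \int_0^\infty \pr{}{X \ge t}\,\rmd t$ applied to the nonnegative random variable $X := \|x^* - x\|_1$. We may assume $B = 1$, since the problem scales linearly with $B$ (rescaling by $1/B$ reduces to the unit-bound case, then we multiply the final bound by $B$). The key idea is that the flat bound $\pr{}{X \ge t} \le 24ks/d$ controls the integral over a bounded initial segment, while the super-exponentially decaying bound $\lambda^{\lfloor t/4\rfloor+1}/(\lfloor t/4\rfloor+1)!$ (with $\lambda := 12ks^2/d$) handles the tail.

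Concretely, I would pick a threshold $T := C \log d \cdot \max\{ks^2/d,\, 1\}$ for a sufficiently large absolute constant $C$, and split the integral at $T$. On $[0, T]$ I invoke the second bound of the theorem, contributing at most $24T\cdot ks/d = \tilde O\bigl((ks/d)\cdot\max\{ks^2/d,\, 1\}\bigr)$, which already matches the target expression up to polylog factors.

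For the tail $\int_T^\infty \pr{}{X \ge t}\,\rmd t$, I would discretize into intervals $[4m, 4(m+1))$ with $m \ge T/4$, each contributing at most $4\lambda^{m+1}/(m+1)!$. The choice of $T$ is tailored so that $m+1 \ge T/4 \ge 2e\lambda$ for every such $m$; Stirling's approximation then yields $\lambda^{m+1}/(m+1)! \le (e\lambda/(m+1))^{m+1} \le 2^{-(m+1)}$, and the geometric tail sum is $O(2^{-T/4}) = d^{-\Omega(C)}$, which is negligible compared with the main term. Combining the two pieces and reinstating the factor of $B$ yields the stated expected-error bound.

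The only mildly delicate point I anticipate is verifying that a single choice of threshold $T$ works uniformly across the two regimes $ks^2/d \le 1$ (where $T$ is essentially $\Theta(\log d)$) and $ks^2/d > 1$ (where $T$ must dominate the growing $\lambda$ so that Stirling actually bites). This is precisely the purpose of the $\max$ in the definition of $T$, and choosing $C$ to be a large enough absolute constant (independent of $d, k, s$) satisfies both requirements simultaneously. Everything else is routine calculation.
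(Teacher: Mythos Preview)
Your proposal is correct and follows essentially the same approach as the paper's proof: reduce to $B=1$, split $\int_0^\infty \pr{}{X\ge t}\,\rmd t$ at a threshold of order $\tilde O(\max\{ks^2/d,1\})$, use the flat bound $24ks/d$ on the initial segment, and apply Stirling to the factorial tail bound to obtain geometric decay. The only cosmetic difference is that the paper takes the threshold $t_0 = O(ks^2/d + \log d)$ additively whereas you use $T = C\log d\cdot\max\{ks^2/d,1\}$ multiplicatively, but both are $\tilde O(\max\{ks^2/d,1\})$ and lead to the same final bound.
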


Assuming $ks^2 = O(d)$, the error bound in Corollary~\ref{cor:expected-error} reduces to $\tilde O(Bks/d)$, which cannot be improved in general---the expected $\ell_1$ error is at least $\Omega(Bks/d)$ when the subspace is $U = \Span(\{e_1, e_2, \ldots, e_k\})$. This is because, in expectation, $ks/d$ out of the first $k$ coordinates get corrupted, and there is no hope in recovering each corrupted entry with an error better than $\Omega(B)$.

In comparison, the naïve algorithm that simply outputs $\tilde x$ gives an error of $\Omega(Bs)$ in the worst case, which is much worse in the low dimensional ($k \ll d$) regime. Another straightforward algorithm is based on the observation that, since $U$ is $k$-dimensional, there exist $k$ coordinates that uniquely determine a vector in $U$. With probability $\ge 1 - \frac{ks}{d}$, none of them gets corrupted and $x$ can be perfectly recovered by solving a linear system. However, when at least one of the $k$ coordinates does get corrupted, it is unclear that the recovery satisfies a reasonable error bound.

The setup of Problem~\ref{prob:mean-est} differs from the sparse recovery problem in two aspects: (1) We don't know the subspace that contains the clean data points; (2) The corruptions are no longer bounded. Our next theorem shows that, in the $ks = O(d)$ regime, the subspace can be efficiently recovered.

\begin{theorem}\label{thm:subspace-recovery}
    In the setup of Problem~\ref{prob:mean-est}, let $U$ be the column space of $\Sigma$. Assuming $ks \le c_0 \cdot d$ for sufficiently small constant $c_0 > 0$, there is an algorithm (namely, Algorithm~\ref{algo:subspace-recovery}) that recovers the subspace $U' \coloneqq \Span(U \cup \{\mu\})$ with high probability using $n = \tilde O(k^2)$ samples and in $\poly(d)$ time.
\end{theorem}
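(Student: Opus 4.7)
The plan is to compute a sample second-moment matrix of the corrupted data after a robust coordinate-wise truncation, and argue that its top eigenspace recovers $U'$. The crucial observation is that $U'$ is precisely the column space of $\mathbb{E}[xx^\top] = \mu\mu^\top + \Sigma$, which has rank at most $k+1$; standard matrix perturbation (Davis--Kahan) then reduces the task to bounding the spectral distance between the empirical second moment of the truncated data and $\mathbb{E}[xx^\top]$.

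First I would robustly estimate $\mu$ and $\operatorname{diag}(\Sigma)$ coordinate-wise. For each $j\in [d]$, only an expected $s/d\le c_0/k$ fraction of samples are corrupted at coordinate $j$, so a trimmed mean and trimmed second moment with trimming fraction $\Theta(s/d+\sqrt{\log(nd)/n})$ produce estimates $\hat\mu_j,\hat\sigma_j^2$ that are accurate within $\tilde O(\sigma_j)$ simultaneously over all $j$ (the union bound succeeds since $n=\tilde O(k^2)\gg\log d$). I would then clip every $\tilde x^{(i)}_j$ to the interval $[\hat\mu_j\pm C\hat\sigma_j\sqrt{\log(nd)}]$ for a large constant $C$: with high probability this leaves every clean entry unchanged while forcing every adversarial entry to have magnitude at most $\tilde O(\|\Sigma\|_\infty^{1/2})$.

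Writing the truncated data as $\tilde y^{(i)}=x^{(i)}+e^{(i)}$, with $e^{(i)}$ supported on $S^{(i)}$, $s$-sparse, and entry-wise bounded, the sample second moment $\hat M=\frac1n\sum_i \tilde y^{(i)}(\tilde y^{(i)})^\top$ splits as $\mathbb{E}[xx^\top]+P_0+P_1+P_1^\top+P_3$, where $P_0$ is the clean sampling error (matrix Bernstein gives $\|P_0\|_{\mathrm{op}}=\tilde O(\|\Sigma\|_{\mathrm{op}}\sqrt{k/n})$ since $\Sigma$ has rank $k$), $P_1=\frac1n\sum_i x^{(i)}(e^{(i)})^\top$ is a mean-zero cross term (since $S^{(i)}$ is independent of $x^{(i)}$) that concentrates, and $P_3=\frac1n\sum_i e^{(i)}(e^{(i)})^\top$ is the pure-corruption term. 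The crucial point is that any coordinate pair $\{j,k\}$ is contained in $S^{(i)}$ with probability only $O(s^2/d^2)$, so $\mathbb{E}[P_3]$ has entries of magnitude $\tilde O(\|\Sigma\|_\infty s/d)$ on the diagonal and a factor of $s/d$ smaller off-diagonal, yielding $\|\mathbb{E}[P_3]\|_{\mathrm{op}}=\tilde O(\|\Sigma\|_\infty\cdot s^2/d)$ via a row-sum estimate, while a matrix-Bernstein bound controls the deviation of $P_3$ around its mean. Combining these, $\|\hat M-\mathbb{E}[xx^\top]\|_{\mathrm{op}}=\tilde O(\|\Sigma\|_{\mathrm{op}}(\sqrt{k/n}+s/d))$, and the top-$(k+1)$ eigenspace of $\hat M$ (the output of Algorithm~\ref{algo:subspace-recovery}) yields a subspace $\hat U'$ close to $U'$.

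The main technical obstacle is the spectral bound on $P_3$: a naive triangle inequality over the $n$ summands, each of operator norm up to $\tilde O(s\|\Sigma\|_\infty)$, would give $\tilde O(s\|\Sigma\|_\infty)$, far too large. The key is to exploit the randomness of the supports $S^{(i)}$ to show that summing forces heavy cancellation on every off-diagonal entry. Making this rigorous requires conditioning on $\{S^{(i)}\}_i$ before invoking matrix Bernstein (so that the adversary's dependence of $e^{(i)}$ on $x^{(i)}$ is absorbed through the entry-wise truncation bound), then bounding the variance proxy $\|\mathbb{E}[P_3^2]\|_{\mathrm{op}}$ to match the target rate $\tilde O(\|\Sigma\|_\infty s^2/d)$. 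A secondary subtlety is arranging the final output so it does not require an a-priori numerical lower bound on $\sigma_{k+1}(\mathbb{E}[xx^\top])$: the algorithm can return the span of the eigenvectors whose eigenvalues exceed the Bernstein error bound rather than a fixed-dimensional top-eigenspace.
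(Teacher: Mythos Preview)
Your approach is fundamentally different from the paper's, and it has a genuine gap. The paper's Algorithm~\ref{algo:subspace-recovery} does \emph{not} compute a second-moment matrix and take top eigenvectors. Instead, it greedily builds a set $J$ of ``pivot'' coordinates: for each new coordinate $i$, it restricts the data to $J\cup\{i\}$, pairs up samples to kill the mean, and calls the robust linear regression algorithm of~\cite{DKS19} to decide whether coordinate $i$ is an exact linear function of the coordinates in $J$. The key point is that on any $(k+1)$ coordinates the fraction of samples touched by corruption is at most $(k+1)s/d\le O(c_0)$, so the~\cite{DKS19} guarantee applies and returns either ``full rank'' or the \emph{exact} weight vector $c$. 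This yields $d-k$ vectors spanning $U^{\perp}$ exactly, and a simple majority vote then pins down $\mu$ modulo $U$. The output is $U'$ on the nose, with no dependence on any eigenvalue gap.

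Your spectral route cannot deliver this. Davis--Kahan converts an operator-norm error $\|\hat M-\mathbb{E}[xx^\top]\|_{\mathrm{op}}$ into a subspace error only after dividing by the smallest nonzero eigenvalue of $\mu\mu^\top+\Sigma$, which the theorem places no lower bound on; if $\Sigma$ has an eigenvalue of size $10^{-100}$, your $\hat U'$ will simply miss that direction. Your proposed fix---keep only eigenvectors above the Bernstein threshold---makes this worse, not better: you would output a strict subspace of $U'$. This matters downstream because the Basis Pursuit step in Theorem~\ref{thm:mean-estimation} needs the \emph{exact} subspace. A second issue is that the adversary in Problem~\ref{prob:mean-est} sees all of $x^{(1)},\ldots,x^{(n)}$ and $S^{(1)},\ldots,S^{(n)}$ before choosing the corruptions, so the $e^{(i)}$ are neither independent nor mean-zero conditional on anything you can fix in advance; matrix-Bernstein arguments for $P_1$ and $P_3$ do not go through, and the adversary can in fact align the rank-one terms $e^{(i)}(e^{(i)})^\top$ across samples to make $\|P_3\|_{\mathrm{op}}$ as large as the truncation allows.
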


The theorem above implicitly assumes that we can access the samples as well as perform arithmetic operations with unlimited accuracy. To translate this result to more realistic models of computation (with limited precision), we note that the bottleneck in Algorithm~\ref{algo:subspace-recovery} will be solving an instance of robust linear regression in which the marginal distribution is a Gaussian with the covariance matrix being a  full-rank principal submatrix of $\Sigma$. Thus, the algorithm will likely be numerically stable as long as every invertible principal submatrix of $\Sigma$ is reasonably well-conditioned (compared to the bit precision). See Remark~\ref{remark:numerical-stability-detailed} for a more detailed discussion.

Theorems \ref{thm:sparse-recovery}~and~\ref{thm:subspace-recovery} together give the following guarantee for recovering an entire Gaussian dataset.

\begin{theorem}\label{thm:mean-estimation}
    In the setup of Problem~\ref{prob:mean-est}, assuming $n = \tilde \Omega(k^2)$ and $ks \le c_0 \cdot d$ for some sufficiently small constant $c_0 > 0$, there is an algorithm (namely, Algorithm~\ref{algo:mean-est}) that outputs $n$ estimates $\hat x^{(1)}, \ldots, \hat x^{(n)}$ in $\poly(d, n)$ time, such that each $\|\hat x^{(i)} - x^{(i)}\|_1$ is bounded by $\tilde O\left(\frac{Bks}{d}\cdot\max\left\{\frac{ks^2}{d}, 1\right\}\right)$ in expectation, where $B \coloneqq \sqrt{\max_{i \in [d]}\Sigma_{ii}}$.
\end{theorem}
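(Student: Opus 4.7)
The plan is to combine Theorem~\ref{thm:subspace-recovery} with Corollary~\ref{cor:expected-error} by first recovering the subspace $U' = \Span(U \cup \{\mu\})$ and then applying the BP method independently to each sample. The only non-trivial extra ingredient is preprocessing, since Theorem~\ref{thm:sparse-recovery} assumes bounded corruptions but Problem~\ref{prob:mean-est} places no a~priori bound on the adversary; this is handled by truncating each coordinate with respect to a crude coordinate-wise mean estimate so that the effective corruption magnitude becomes $\tilde O(B)$.

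Concretely, Algorithm~\ref{algo:mean-est} proceeds as follows. First, call Algorithm~\ref{algo:subspace-recovery} on the $n = \tilde\Omega(k^2)$ corrupted samples to obtain $U'$ with high probability; every $x^{(i)} = \mu + v^{(i)}$ with $v^{(i)} \in U$ lies in $U'$, whose dimension is at most $k + 1$. Second, for each coordinate $j$ compute a robust estimate $\hat\mu_j$, for example the median of $\tilde x^{(1)}_j, \ldots, \tilde x^{(n)}_j$. Since $ks \le c_0 d$, the expected fraction of corrupted entries in any fixed coordinate is $s/d \le c_0/k$, which is bounded away from $1/2$; a Chernoff bound combined with a Gaussian quantile estimate then gives $|\hat\mu_j - \mu_j| = O(B)$ uniformly over $j \in [d]$ with high probability. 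Third, replace each $\tilde x^{(i)}_j$ by $\hat\mu_j$ whenever $|\tilde x^{(i)}_j - \hat\mu_j| > C B\sqrt{\log(nd)}$ for a sufficiently large constant $C$. A standard Gaussian tail bound and a union bound over $i, j$ ensure that no clean entry is modified, so the truncated vector $\bar x^{(i)}$ still agrees with $x^{(i)}$ outside $S^{(i)}$ and satisfies $\|\bar x^{(i)} - x^{(i)}\|_\infty = \tilde O(B)$. Finally, run BP on each $\bar x^{(i)}$ with subspace $U'$ to produce $\hat x^{(i)}$; on the low-probability event that subspace recovery or mean estimation fails, simply output $\bar x^{(i)}$, incurring at most $\tilde O(Bs)$ error deterministically.

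At this point the instance $(\bar x^{(i)}, U', S^{(i)})$ satisfies all hypotheses of Problem~\ref{prob:sparse} with subspace dimension $k+1$, boundedness parameter $\tilde O(B)$, and uniformly random support $S^{(i)}$ (whose distribution is unaffected by the preprocessing, which is a function of the corrupted data alone). Applying Corollary~\ref{cor:expected-error} per sample yields
\[
    \Ex{}{\|\hat x^{(i)} - x^{(i)}\|_1 \,\big|\, \text{good event}} = \tilde O\!\left(\frac{Bks}{d}\cdot\max\!\left\{\frac{ks^2}{d},\, 1\right\}\right),
\]
and since the bad events have probability at most $1/\poly(d,n)$ and contribute only a vanishing $\tilde O(Bs)/\poly(d,n)$ to the expectation, the same bound holds unconditionally.

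The main obstacle I anticipate is the bookkeeping around conditional independence: the adversary sees the entire clean dataset and may tailor the perturbations to $S^{(i)}$, so $\bar x^{(i)}$ can depend on $S^{(i)}$ in intricate ways even though its support remains inside $S^{(i)}$. Verifying that, from the BP analyzer's perspective, the randomness in $S^{(i)}$ remains uniform over $\binom{[d]}{s}$---so that the guarantee of Theorem~\ref{thm:sparse-recovery} genuinely applies to each sample---is the only non-mechanical step; the remainder reduces to union bounds over standard Gaussian and Chernoff estimates.
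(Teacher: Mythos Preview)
Your proposal is correct and matches the paper's approach essentially step for step: recover $U'$ via Algorithm~\ref{algo:subspace-recovery}, take coordinate-wise medians to get a crude $\tilde O(B)$-accurate estimate of $\mu$, truncate each coordinate to make the corruptions $\tilde O(B)$-bounded without touching clean entries, and then invoke Corollary~\ref{cor:expected-error} per sample. Your explicit handling of the bad-event contribution and your remark on why the distribution of $S^{(i)}$ remains effectively uniform after conditioning are in fact more careful than the paper's own proof, which glosses over both points.
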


As a direct corollary, the recovered data points allow us to estimate the unknown mean $\mu$ of the underlying Gaussian distribution.

\begin{corollary}\label{cor:mean-estimation}
    Under the same assumptions as in Theorem~\ref{thm:mean-estimation}, there is an efficient algorithm that outputs $\hat \mu$ such that $\|\hat\mu - \mu\|_1$ is $\tilde O\left(\frac{Bks}{d}\cdot\max\left\{\frac{ks^2}{d}, 1\right\} + \frac{Bd}{\sqrt{n}}\right)$ in expectation.
\end{corollary}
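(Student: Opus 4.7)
\textbf{Proof plan for Corollary~\ref{cor:mean-estimation}.}
My plan is to simply run the algorithm from Theorem~\ref{thm:mean-estimation} to obtain estimates $\hat x^{(1)}, \ldots, \hat x^{(n)}$ of the $n$ clean data points and then output the empirical average
\[
    \hat\mu \;\coloneqq\; \frac{1}{n}\sum_{i=1}^{n} \hat x^{(i)}.
\]
This reduces the entire statement to a triangle-inequality decomposition and a standard Gaussian sampling bound: writing $\bar x \coloneqq \frac{1}{n}\sum_{i=1}^n x^{(i)}$ for the average of the unobserved clean samples, I would split
\[
    \|\hat\mu - \mu\|_1 \;\le\; \|\hat\mu - \bar x\|_1 + \|\bar x - \mu\|_1
    \;\le\; \frac{1}{n}\sum_{i=1}^n \|\hat x^{(i)} - x^{(i)}\|_1 + \|\bar x - \mu\|_1,
\]
so it suffices to control each term separately in expectation.

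For the first term, linearity of expectation together with the per-sample guarantee from Theorem~\ref{thm:mean-estimation} immediately bounds $\mathbb{E}\,\|\hat\mu - \bar x\|_1$ by $\tilde O\!\left(\frac{Bks}{d}\cdot\max\!\left\{\frac{ks^2}{d}, 1\right\}\right)$. For the sampling term, since $x^{(i)} \sim \N(\mu,\Sigma)$ i.i.d., the average satisfies $\bar x - \mu \sim \N(0, \Sigma/n)$, so each coordinate $(\bar x - \mu)_j$ is a centered Gaussian with variance $\Sigma_{jj}/n \le B^2/n$. The standard formula $\mathbb{E}|Z| = \sigma\sqrt{2/\pi}$ for $Z \sim \N(0,\sigma^2)$ then gives
\[
    \mathbb{E}\,\|\bar x - \mu\|_1 \;=\; \sum_{j=1}^d \mathbb{E}\,|(\bar x - \mu)_j| \;\le\; d \cdot \sqrt{\tfrac{2}{\pi}}\cdot \frac{B}{\sqrt{n}} \;=\; O\!\left(\frac{Bd}{\sqrt{n}}\right).
\]
Adding the two bounds yields the claimed rate.

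There is no real obstacle here; the only subtlety is that the $\hat x^{(i)}$ produced by Theorem~\ref{thm:mean-estimation} and the clean samples $x^{(i)}$ share randomness, but since the two terms are combined via the triangle inequality before taking expectations, no independence assumption is needed. Running time is dominated by the call to Algorithm~\ref{algo:mean-est}, so the overall procedure remains $\poly(d,n)$.
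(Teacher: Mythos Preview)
Your proposal is correct and matches the paper's intended argument: the paper states Corollary~\ref{cor:mean-estimation} as a ``direct corollary'' of Theorem~\ref{thm:mean-estimation} without giving an explicit proof, and the natural reading is precisely your empirical-average estimator together with the triangle-inequality split into the recovery error and the Gaussian sampling error.
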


The additional error term $Bd/\sqrt{n}$ is unavoidable, even when $k = 1$ and no corruption is present. When $ks^2 = O(d)$ and $n = \Omega(d^4/k^2s^2)$, the error bound in Corollary~\ref{cor:mean-estimation} reduces to $\tilde O(Bks/d)$. Note that the na\"ive approach of estimating each coordinate of the mean vector separately would give a worse error guarantee when $k \ll d$. Indeed, when restricted to the $i$-th coordinate, the dataset becomes samples drawn from $\N(\mu_i, \Sigma_{ii})$, in which an $O(s/d)$ fraction gets corrupted. We can at best estimate $\mu_i$ up to an additive error of $O(\sqrt{\Sigma_{ii}}\cdot s/d) = O(Bs/d)$, so the $\ell_1$ error bound would be $O(Bs)$, i.e., larger by a factor of $d / k$. In light of this comparison, Corollary~\ref{cor:mean-estimation} formalizes the intuition that since the problem is essentially $k$-dimensional, we only need to pay the $O(Bs/d)$ cost (the error in the one-dimensional case) $k$ times instead of $d$ times.

\section{An Overview of Our Approach}\label{sec:approach}

We sketch the proofs of Theorems \ref{thm:sparse-recovery}, \ref{thm:subspace-recovery}~and~\ref{thm:mean-estimation} here. Our analysis of the Basis Pursuit method for Problem~\ref{prob:sparse} is inspired by the Restricted Nullspace Property (RNP), which is known to imply the exact recovery guarantee of BP. We introduce a quantitative version of RNP (Lemma~\ref{lemma:nec-cond}) that gives a necessary condition---on both the set $S$ of corrupted coordinates and the subspace $U$---for BP to incur a large error in its recovery.

The technical highlight of our analysis is Lemma~\ref{lemma:comb-constraint}, which gives a combinatorial constraint on the family of size-$s$ subsets of $[d]$ on which the aforementioned condition holds. Crucially, this constraint is a consequence of only the dimensionality of $U$, and none of the usual structural assumptions (such as incoherence or restricted isometry) on $U$. We then sketch how this constraint implies the desired high-probability error bound via studying a combinatorial problem similar to those in the extremal combinatorics literature.

Then, we explain how Problem~\ref{prob:mean-est} is solved as an application of the above. To recover the unknown subspace in the $ks = O(d)$ regime (Theorem~\ref{thm:subspace-recovery}), the key observation is that the recovery problem is essentially $(k + 1)$-dimensional, so the known results on robust linear regression (against sample-level corruptions) can be applied. Theorem~\ref{thm:mean-estimation} then follows from the two earlier theorems and a simple reduction of the problem to the bounded-corruption case.

\subsection{A New Analysis of Basis Pursuit}
\paragraph{Generalization of the RNP.} With respect to subspace $U \subseteq \R^d$, we say that $S \subseteq [d]$ is \emph{$t$-dominant} for $t > 0$, if there exists $u \in U$ such that $\|u\|_1 = 1$ and
\[
    \sum_{i \in S}|u_i| \cdot \1{|u_i| \le 1/t} \ge 1/2.
\]
We will show in Lemma~\ref{lemma:nec-cond} that the BP method incurs an $\ell_1$ error of $t$ only if the set $S$ of corrupted coordinates is $t$-dominant.

The definition above becomes more natural in the context of the Restricted Nullspace Property, on which the textbook recovery guarantee of BP (e.g.,~\cite{FR13}) relies.\footnote{As mentioned in Remark~\ref{remark:basis-pursuit}, subspace $U$ in Problem~\ref{prob:sparse} corresponds to the nullspace of the design matrix $A$ in the usual formulation of BP, so Definition~\ref{definition:RNP} is indeed about a nullspace.}

\begin{definition}[Restricted Nullspace Property]\label{definition:RNP}
    A linear subspace $U \subseteq \R^d$ satisfies $s$-RNP if it holds for every $u \in U \setminus \{0\}$ and $S \in \binom{[d]}{s}$ that $\|u|_S\|_1 < \|u|_{[d]\setminus S}\|_1$.
\end{definition}

$s$-RNP is easily seen to be equivalent to the non-existence of $1$-dominant sets of size $s$---when $t = 1$, the condition $\1{|u_i| \le 1/t}$ in the definition of $t$-dominant sets is implied by $\|u\|_1 = 1$. 

It is well-known that RNP is equivalent to the exact recovery guarantee of the Basis Pursuit method.

\begin{theorem}[Theorem 7.8 of~\cite{Wainwright19}]
    In the setup of Problem~\ref{prob:sparse}, $U$ satisfies $s$-RNP if and only if BP recovers every $x \in U$ with zero error for any $S \in \binom{[d]}{s}$.
\end{theorem}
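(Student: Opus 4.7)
The plan is to prove the two implications separately; both come down to short $\ell_1$-norm manipulations together with the definition of RNP.

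For the forward direction (that $s$-RNP implies exact BP recovery), I would fix $x \in U$, $S \in \binom{[d]}{s}$, and let $\hat x \in U$ be any BP optimum. Writing $e \coloneqq \tilde x - x$ (which is supported on $S$) and $\Delta \coloneqq \hat x - x \in U$, the feasibility of $x$ for program~\eqref{eq:basis-pursuit} gives
\[
    \|\Delta - e\|_1 \;=\; \|\hat x - \tilde x\|_1 \;\le\; \|x - \tilde x\|_1 \;=\; \|e\|_1.
\]
I would then split the left-hand side across $S$ and its complement. Since $e$ vanishes off $S$, $(\Delta-e)|_{S^c} = \Delta|_{S^c}$, and the reverse triangle inequality on $S$ gives $\|(\Delta-e)|_S\|_1 \ge \|e\|_1 - \|\Delta|_S\|_1$. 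Combining these yields $\|\Delta|_{S^c}\|_1 \le \|\Delta|_S\|_1$. Applying the strict inequality from Definition~\ref{definition:RNP} to $\Delta \in U$ then rules out $\Delta \ne 0$, so $\hat x = x$.

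For the reverse direction, I would argue by contrapositive: assume $s$-RNP fails, so some $u \in U \setminus \{0\}$ and $S \in \binom{[d]}{s}$ satisfy $\|u|_S\|_1 \ge \|u|_{S^c}\|_1$. I would then exhibit a BP instance on which exact recovery fails by taking the clean signal $x \coloneqq u$ and constructing the observation $\tilde x \coloneqq u|_{S^c}$, so that $\tilde x - x = -u|_S$ is supported on $S$ and satisfies the corruption constraint. The BP objective at $x = u$ equals $\|u|_S\|_1$, while the competing feasible point $0 \in U$ achieves the no-larger value $\|u|_{S^c}\|_1$; hence BP cannot uniquely return $x \ne 0$, violating zero-error recovery.

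I do not anticipate a real technical obstacle: both directions reduce to one-line triangle-inequality arguments. The only subtlety is interpretive—when $\|u|_S\|_1 = \|u|_{S^c}\|_1$ the BP program has tied optima, so ``zero-error recovery'' must be read as requiring $x$ to be the unique BP optimum (equivalently, every optimum equals $x$). This is the standard reading and matches the strict inequality built into Definition~\ref{definition:RNP}, which is what both implications exploit.
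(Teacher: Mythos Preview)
The paper does not actually prove this theorem: it is stated as a citation to Wainwright's textbook (Theorem~7.8 there) and is invoked only as background to motivate the notion of $t$-dominance and Lemma~\ref{lemma:nec-cond}. So there is no ``paper's own proof'' to compare against.

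Your argument is the standard textbook proof and is correct. Two minor remarks. First, in the reverse direction, Problem~\ref{prob:sparse} carries the boundedness constraint $\|\tilde x - x\|_\infty \le B$; your construction $\tilde x = u|_{S^c}$ has corruption $-u|_S$, so you should note that $u$ may be rescaled so that $\|u|_S\|_\infty \le B$ without affecting the inequality $\|u|_S\|_1 \ge \|u|_{S^c}\|_1$. Second, your reading of ``zero-error recovery'' as ``every BP optimum equals $x$'' is exactly what is needed to handle the tie case $\|u|_S\|_1 = \|u|_{S^c}\|_1$, and it is consistent with how the paper uses the result (e.g., Theorem~\ref{thm:sparse-recovery} is stated for an arbitrary optimum $x^*$).
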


In particular, the BP method may err when the RNP is violated. For example, the subspace $U = \Span(\{e_1, e_2, \ldots, e_k\})$ does not satisfy $s$-RNP for any $s \ge 1$, since $[s]$ is a $1$-dominant set witnessed by $u = e_1$. As discussed earlier, exact recovery in this setup is not guaranteed, since whenever $S \cap [k]$ is non-empty, the adversary may arbitrarily corrupt one of the $k$ informative coordinates.

On the bright side, when $S$ is uniformly drawn from $\binom{[d]}{s}$, BP still satisfies a high-probability error bound on this example. We can easily verify that for this choice of $U$, the BP method simply zeros out the last $d-k$ coordinates of $\tilde x$, and the $\ell_1$ error is linear in $|S \cap [k]|$, which is $ks/d$ in expectation and has a sub-exponential tail.

This high-probability error bound can be alternatively explained in terms of of $t$-dominant sets. It can be easily verified that for $t \le k$, each fixed $S \in \binom{[d]}{s}$ is $t$-dominant w.r.t.\ $U$ if and only if $|S \cap [k]| \ge t/2$. When $S$ is chosen uniformly at random, because of the sub-exponential tail of $|S \cap [k]|$, the probability of $|S \cap [k]| \ge t/2$ is exponentially small in $t$. Lemma~\ref{lemma:nec-cond} would then imply a high-probability error bound of BP. Perhaps surprisingly, our Theorem~\ref{thm:sparse-recovery} gives quantitatively similar bounds for every $k$-dimensional subspace $U$. In other words, the simple example above is essentially the worst case.

\paragraph{A constraint on dominant sets.} The crux of our analysis is then to upper bound the number of $t$-dominant sets w.r.t.\ the subspace $U$. As a warmup, we start with the special case that $t = 1$, and assume that the dominance is always \emph{strict}, in the sense that every $1$-dominant set $S \in \binom{[d]}{s}$ has a witness $u \in U$ such that $\|u\|_1 = 1$ and $\|u|_S\|_1 > 1/2$ (instead of $\|u|_S\|_1 \ge 1/2$).

Let $\F$ be the family of strictly $1$-dominant sets. We note the following property of $\F$:
\begin{center}
\it
The family $\F$ does not contain $k + 1$ sets that are pairwise disjoint.
\end{center}
To see this, suppose towards a contradiction that $S_1, S_2, \ldots, S_{k+1} \in \F$ are pairwise disjoint, and vector $u^{(i)} \in U$ witnesses the strict dominance of $S_i$. Consider the $(k + 1)\times d$ matrix $A = \begin{bmatrix}
    u^{(1)} & u^{(2)} & \cdots & u^{(k+1)} 
\end{bmatrix}^{\top}$. We obtain another $(k + 1)\times(k + 1)$ matrix $A'$ by appropriately aggregating the columns of $A$. First, if $j \in S_i$ and $A_{ij} < 0$, we negate the $j$-th column of $A$. Then, we define the $i$-th column of $A'$ as the sum of the $j$-th column of $A$ over $j \in S_i$. For instance, for $(S_1, S_2, S_3) = (\{1, 2\}, \{3, 4\}, \{5\})$ and
\[
    A = \begin{bmatrix}
        2 & -1 & 0 & 0 & 2\\
        3 & 3 & -4 & 5 & 2\\
        1 & -3 & 0 & 2 & -7
    \end{bmatrix},
\]
Columns 2, 3, and 5 of $A$ are negated, and the resulting matrix $A'$ is
\[
    A' = \begin{bmatrix}
        3 & 0 & -2\\
        0 & 9 & -2\\
        4 & 2 & 7        
    \end{bmatrix}.
\]

It is clear from our construction of $A'$ that: (1) $A'$ is at most rank-$k$, since $\rank(A') \le \rank(A) \le \dim(U) = k$; (2) $A'$ is strictly diagonally dominant, i.e., $|A'_{ii}| > \sum_{j \ne i}|A'_{ij}|$ for every $i \in [k + 1]$. However, these two are contradictory, since every strictly diagonally dominant matrix is full-rank by the Gershgorin circle theorem.

\paragraph{Bounding the number of dominant sets.} Then, bounding the number of $1$-dominant sets becomes a combinatorial problem: What is the largest possible size of a family $\F \subseteq \binom{[d]}{s}$ that contains no $k + 1$ pairwise disjoint members?  This problem has been studied since the work of Erd\H{o}s~\cite{Erdos65}, and the long-standing Erd\H{o}s Matching Conjecture (EMC) states that when $d \ge s(k+1) - 1$, the maximum size is exactly
\[
    \max\left\{\binom{d}{s} - \binom{d - k}{s}, \binom{s(k+1) - 1}{s}\right\}.
\]
Note that the above is the maximum between $|\F_1|$ and $|\F_2|$, where $\F_1 \coloneqq \{S \in \binom{[d]}{s}: S \cap [k] \ne \emptyset\}$ and $\F_2 \coloneqq \binom{[s(k+1)-1]}{s}$ are both families that contain no $k + 1$ disjoint sets via simple pigeonhole arguments. A recent work of Frankl and Kupavskii~\cite{FK22} proved the EMC for all sufficiently large $k$ and $d \ge \frac{5}{3}ks$.

While these proofs tend to be technical and sophisticated, if we are willing to lose a constant factor, a nearly-tight bound of $|\F| \le O(ks/d)\cdot\binom{d}{s}$ is well-known and easy to prove via the following argument. Let $A_1, A_2, \ldots, A_{\lfloor d/s\rfloor}$ be pairwise disjoint members of $\binom{[d]}{s}$ chosen uniformly at random.\footnote{More formally, we choose a permutation $\pi: [d]\to [d]$ uniformly at random, and let $A_i \coloneqq \{\pi((i-1)s+1), \pi((i-1)s+2),\ldots,\pi(is)\}$.} What would be the expected size of $\F \cap \{A_1, A_2, \ldots, A_{\lfloor d/s\rfloor}\}$? This number must be at most $k$; otherwise we would find $k + 1$ disjoint sets in $\F$. Meanwhile, this expectation is exactly given by $|\F|\cdot\lfloor d/s\rfloor / \binom{d}{s}$, since each $A_i$ coincides with each $S \in \F$ with probability $1/\binom{d}{s}$. The two observations together give $|\F| \le k\binom{d}{s}/\lfloor d/s\rfloor = O(ks/d) \cdot \binom{d}{s}$. Furthermore, this bound is tight up to a constant factor, as the set family $\F_1$ defined earlier has size $\binom{d}{s} - \binom{d-k}{s} \ge \Omega(\min\{ks/d, 1\})\cdot\binom{d}{s}$. 


\paragraph{Handling weak dominance and the general $t$ case.} Recall that we made two simplifications: strict dominance  and $t = 1$. Without the strict dominance, repeating the first half of the above argument would end up with a $(k+1)\times(k+1)$ matrix $A'$ that is only weakly diagonally dominant, i.e., $|A'_{ii}| \ge \sum_{j\ne i}|A'_{ij}|$, which does not imply that $A'$ is full rank.\footnote{Consider the $2\times 2$ all-one matrix.} Fortunately, we still have $\rank(A') = \Omega(k)$.\footnote{Technically, this also requires $A'$ to have non-zero rows in addition to weak diagonal dominance.} Thus, if we started with $Ck$ independent $1$-dominant sets for sufficiently large $C$, we would still obtain a contradiction to $\dim U = k$. The lower bound on $\rank(A')$ is proved by a probabilistic argument: We carefully sample a submatrix of $A'$, so that the submatrix is strictly diagonally dominant, and the dimension only shrinks by a constant factor in expectation.

To handle the general $t$ case, we strengthen the combinatorial constraint on $\F$ into the following condition (Lemma~\ref{lemma:comb-constraint}, our main technical lemma):
\begin{center}
    \it
    The family of $t$-dominant sets does not contain $\Omega(k)$ sets that are ``almost disjoint''.
\end{center}
We will use two different notions of ``almost disjointness'': (1) the pairwise intersections are bounded by $O(t/k)$; (2) the sets become disjoint after removing $O(t)$ elements from each set. Note that the first definition implies the second, so, conversely, using the second notion imposes a stricter constraint on the family. Also, using either notion, the constraint gets stricter as $t$ grows.

Lemma~\ref{lemma:comb-constraint} is proved using an analogous (yet slightly more technical) argument to the above---assuming the existence of $\Omega(k)$ almost disjoint $t$-dominant sets, we construct a matrix using the witnesses of the dominance, such that the matrix has rank $\le \dim U = k$ and satisfies a weaker diagonal dominance property. Finally, these two are shown to be contradictory via a probabilistic argument.

Using Lemma~\ref{lemma:comb-constraint}, we upper bound the number of $t$-dominant sets via two different approaches. The third bound in Theorem~\ref{thm:sparse-recovery} follows from the first notion of almost disjointness, and an easy extension of the double counting argument presented above. The first bound in Theorem~\ref{thm:sparse-recovery}, on the other hand, relies on the finer-grained definition of disjointness and a quite different analysis based on Hall's marriage theorem.

\subsection{Recovery of Subspace and the Entire Dataset}
We briefly discuss how we recover the subspace in the setup of Problem~\ref{prob:mean-est}, thereby reducing the problem to recovering the individual data points given the low-rank structure (i.e., Problem~\ref{prob:sparse}).

Let $U$ be the column space of the Gaussian covariance $\Sigma$. For simplicity, we first assume $\mu = 0$ and that the first $k$ coordinates uniquely determine a vector in $U$, i.e., there exist $c_{k+1}, c_{k+2}, \ldots, c_d \in \R^k$ such that for every $u \in U$ and $i \in \{k + 1, k + 2, \ldots, d\}$, $u_i = c_i^{\top}u|_{[k]}$. Then, the problem can be reduced to $d - k$ linear regression instances in $k$ dimensions: Let $\Sigma'$ be the $k \times k$ restriction of $\Sigma$ to coordinates in $[k]$. For each clean data $x^{(j)} \in \R^d$, $\left(x^{(j)}|_{[k]}, x^{(j)}_i\right)$ can be viewed an instance drawn from $\N(0, \Sigma')$ and its label produced by the linear function $x \mapsto c_i^{\top}x$. Consequently, $\left(\tilde x^{(j)}|_{[k]}, \tilde x^{(j)}_i\right)$ is a potentially corrupted copy and the probability of corruption is at most $(k + 1)\cdot s/d$, which is bounded by a small constant assuming $ks = O(d)$. Using the robust linear regression algorithm of~\cite{DKS19}, we can efficiently recover each $c_i$, and thus the whole subspace $U$.

In the general case, we can easily find a subset of $k$ pivotal coordinates, since the algorithm of~\cite{DKS19} also allows us to test whether including an additional coordinate introduces a linear dependence. Also note that the above procedure only recovers the column space $U$ of the covariance, which might not contain the mean vector $\mu$. To include $\mu$ into the subspace, it is sufficient to compute $\mu^{\top}v_1, \mu^{\top}v_2, \ldots, \mu^{\top}v_{d-k}$, where $\{v_i\}_{i \in [d - k]}$ is a basis of $U^{\bot}$. Note that for $x \sim \N(\mu, \Sigma)$, $x^{\top}v_i$ exactly gives $\mu^{\top}v_i$, though it might be altered if any of the coordinates in the support of $v_i$ is corrupted. If we pick each $v_i$ to be $(k + 1)$-sparse (which is always possible), the fraction of incorrect values of $x^{\top}v_i$ is at most $(k + 1)\cdot (s/d) \ll 1 / 2$. This allows us to recover $\mu^{\top}v_i$ by a simple majority vote.

Finally, we sketch the proof of Theorem~\ref{thm:mean-estimation}: First, we find the subspace $U' = \Span(U \cup \{\mu\})$ as outlined above. Then, we estimate each $\mu_i$ up to an $O(Bs/d)$ additive error, by solving the one-dimensional mean estimation problem with corruption rate $s / d$ and variance $\Sigma_{ii} \le B^2$. We then clip the $i$-th coordinate of the data points so that they become $\tilde O(B)$-close to the estimated mean; with high probability, no clean entries will be changed. Therefore, each clipped data point can be viewed as an instance of Problem~\ref{prob:sparse} with known subspace $U'$ and corruptions bounded by $\tilde O(B)$. Each data point can then be recovered up to a small $\ell_1$ error using the BP method.

\section{Related Work}\label{sec:related}
\paragraph{Comparison to [Liu-Park-Rekatsinas-Tzamos, ICML'21].} Most closely related to our work is a recent paper of Liu, Park, Rekatsinas and Tzamos~\cite{LPRT21} on robust Gaussian mean estimation in a very similar setup. In their model, the clean data lie in the column space of a $d \times k$ matrix $A$ (where $k \ll d$), and the adversary is allowed to corrupt a small fraction of coordinates in the dataset. An important difference is that we assume that the corruptions are randomly located, while the locations might be adversarially chosen in the model of~\cite{LPRT21}. In return, we make no assumptions on the subspace, whereas their estimation guarantee depends on a complexity measure of matrix $A$. Therefore, the results in the two models are qualitatively different and incomparable.
 
In more detail, the positive results of~\cite{LPRT21} rely on the assumption that the linear code defined by $A$ has a large distance, which guarantees that the clean data can be decoded (information-theoretically) even with a small number of missing or corrupted coordinates. The mean estimation bounds in~\cite{LPRT21} depend on the quantity $m_A$ defined as the smallest number $m$ such that we can remove $m$ rows from $A$ to reduce its rank by $1$. It is easy to show that this definition of $m_A$ is equivalent to $\min_{z \in \R^k\setminus\{0\}}\|Az\|_0$, namely, the distance of the linear code defined by $A$.

In terms of the algorithm, we follow a similar approach to that of~\cite{LPRT21}: first denoising each data point using the low-rank structure, and then use the cleaned data to estimate the mean. While the assumptions of~\cite{LPRT21} makes the denoising information-theoretically easy, the decoding is NP-hard in general. In contrast, we analyze the natural $\ell_1$ relaxation of the sparse recovery problem, which, via a non-trivial analysis, is shown to succeed with high probability over the randomness in the locations of the corrupted coordinates.

\paragraph{Other models of coordinate-level corruption.} Chen, Caramanis and Mannor~\cite{CCM13} considered a sparse linear regression setup, in which the weight vector has at most $k \ll d$ non-zero entries, and on each coordinate, the adversary may corrupt an $\eps$ fraction of the data entries (that are not necessarily randomly located). This corruption model is stronger than ours, since an $\eps$ fraction of the data points could be entirely corrupted. Their main result is an efficient algorithm that outputs a non-trivial estimate of the weight vector whenever the sample size is $\Omega(k\log d)$ and the corruption rate is $O\left(\frac{1}{\sqrt{k}\log d}\right)$. 

Hu and Reingold~\cite{HR21} studied a robust mean estimation setting in which the data can be highly incomplete: the adversary is allowed to corrupt an $\eps$ fraction of the samples as well as to conceal all but a $\gamma$ fraction of the entries on each coordinate. The concealed entries are chosen independently of the data. They gave efficient algorithms that recover the mean of the data distribution up to a dimension-independent $\ell_2$ error of $\tilde O(\eps/\gamma)$ or $O(\sqrt{\eps/\gamma})$, depending on whether the distribution is assumed to be sub-Gaussian or just bounded-covariance.

\paragraph{Robust PCA.} Robust PCA is the problem of decomposing a matrix into a low-rank component and a sparse component. In the seminal work of Cand\`es, Li, Ma, and Wright~\cite{CLMW11}, it was shown that under certain assumptions, such a decomposition is unique and can be computed efficiently using a simple convex program. Roughly speaking, the unique decomposition is possible when the low-rank component is far from being sparse (guaranteed by an appropriate incoherence assumption), and the sparse component is far from being low-rank (guaranteed by the assumption that the corruptions are randomly located). Subsequent work~\cite{HKZ11,ANW12,XCS12,NNSAJ14} studied alternative formulations of the setup and optimization methods.

In the context of robust PCA, we may view the corrupted data in Problem~\ref{prob:mean-est} as an $n \times d$ matrix. Our main result (Theorem~\ref{thm:mean-estimation}) guarantees the accurate recovery of the data matrix up to a small $\ell_1$ error on each row. Compared to most of the work on robust PCA, our algorithm does not need any assumption on the low-rank subspace, at the cost of stronger distributional assumptions (i.e., the data are Gaussian), a more narrow parameter regime in which our results apply (i.e., the assumption that $ks = O(d)$), and an imperfect recovery guarantee (which is inevitable).

We remark that \cite{ANW12} is a notable exception: their assumptions on the low-rank component is milder than incoherence. When their result is specialized to robust PCA on a $d_1 \times d_2$ matrix, it requires each entry of the low-rank component to be bounded by $\alpha/\sqrt{d_1d_2}$ in absolute value, where $d_1$ and $d_2$ are the dimensions of the matrix. In this scenario, their result implies the recovery of the low-rank part up to a squared Frobenius error of $O(\alpha^2s'/(d_1d_2))$, where $s'$ is the number of corrupted entries. In the setup of our Problem~\ref{prob:mean-est}, we have $s' = s\cdot n$ and $\alpha/\sqrt{d_1d_2} = O(B\sqrt{\log(nd)})$ for $B = \sqrt{\max_{i \in [d]}\Sigma_{ii}}$.\footnote{We may first estimate each coordinate up to an $O(Bs/d)$ error, and subtract this estimate from the data. After the shifting, the clean data are bounded by $O(B\sqrt{\log(nd)})$ entrywise.} Then, the result of~\cite{ANW12} implies an $\tilde O(B^2sn)$ squared Frobenius error, which is no better than using the corrupted data as the estimate. On the other hand, as noted in \cite[Example 6]{ANW12}, this bound cannot be further improved in their setup, since the locations of the sparse corruptions are allowed to be arbitrary.

\paragraph{Outlier-robust PCA.} Another line of work on outlier-robust PCA~\cite{FXY12,XCM13,YX15,JLT20,KSKO20,DKPP23} studies a different and orthogonal model of robustness. In this setup, the goal is to identify the top eigenvector of an unknown distribution, given a corrupted dataset from that distribution. Compared to the robust PCA literature in the previous paragraph, this setting is further from our setting, since the corruptions are on the sample level, and the goal is to extract information of the underlying distribution, rather than recovering the dataset.

\paragraph{Sparse recovery.} In sparse recovery, we aim to recover a $k$-sparse vector $x \in \R^d$ (also called the signal) given the value of $y = Ax$ for an $n \times d$ matrix $A$ (also called the measurements). The Basis Pursuit program, minimizing $\|\hat x\|_1$ subject to $A\hat x = y$, is a natural convex relaxation for the problem. Unlike the setup of our Problem~\ref{prob:sparse}, the linear measurement $A$ is often under the control of the recovery algorithm, and the goal is to design $A$ such that efficient recovery is possible; in contrast, the linear measurements (induced by the unknown subspace $U$) are arbitrary in Problem~\ref{prob:sparse}. The sparse recovery problem is also closely related to compressed sensing~\cite{CRT06,Donoho06} as well as data stream algorithms. A different formulation has been studied in the TCS literature: even when $x$ is not $k$-sparse, we are still required to output a $k$-sparse $\hat x$ such that $\|\hat x - x\|_p$ is close to $\min_{\|x'\|_0 \le k}\|x' - x\|_p$. Nearly tight performance guarantees have been obtained for various different metrics of performance guarantee~\cite{IR08,BIPW10,PW11,IP11}, and also in a related setup where the linear measurements can be chosen adaptively~\cite{IPW11,PW13,KP19}.

\paragraph{Robust subspace recovery.} Robust subspace recovery is the problem of identifying a subspace that contains the \emph{inliers} (i.e., clean data) in the presence of outliers that might be outside this subspace. \cite{HM13} gave an algorithm that is provably correct and efficient as long as: (1) The fraction of inliers is at least $k/d$, where $k$ and $d$ are the dimensions of the subspace and the ambient space, respectively; (2) The inliers and outliers together satisfy a non-degeneracy assumption. \cite{ML19} studies the case where only the inliers are assumed to be in general position and the outliers can be arbitrary. Recent works of~\cite{RY20,BK21} gave algorithms in the list-decodable learning setting based on the sum-of-squares method, under additional distributional assumptions on the inliers. See \cite{LM18} for a detailed survey of this literature.

\paragraph{Extremal combinatorics.} As outlined in Section~\ref{sec:approach}, our analysis of the BP method involves upper bounding the size of a set family subject to certain combinatorial constraints (e.g., on the pairwise intersections between the sets in the family). This is the focus of extremal combinatorics. In particular, our proof of Lemma~\ref{lemma:size-upper-bound} is based on techniques in this literature. For many problems in extremal combinatorics, the goal is to either find the exact maximum, or at least obtain asymptotically tight bounds (i.e., tight up to a $1 + o(1)$ factor). In some cases, the set families that achieve this maximum size have also been characterized. In contrast, for the purpose of this work, we are more interested in bounds that might be off by a constant factor, but hold for a wider range of parameters.

Most closely related to our approach is the Erd\H{o}s Matching Conjecture regarding families of size-$s$ subsets of $[d]$ that do not contain $k + 1$ pairwise disjoint sets. \cite{Erdos65} conjectured that the maximum size is given by the larger between $\binom{d}{s} - \binom{d - k}{s}$ and $\binom{s(k+1)-1}{s}$, for all $d \ge s(k+1) - 1$. It was also shown in~\cite{Erdos65} that the first term gives the correct answer for fixed $k, s$ and sufficiently large $d \ge d_0(k, s)$.  This result was strengthened by subsequent work, and Frankl and Kupavskii~\cite{FK22} recently relaxed the assumption to $d \ge \frac{5}{3}ks$ for all sufficiently large $k$; we refer the readers to~\cite{FK22} for a more detailed discussion on the progress towards proving EMC. Another notable special case is the $k = 1$ case, in which the family is not allowed to contain any two disjoint sets. A more general version of this case, now known as the Erd\H{o}s-Ko-Rado Theorem, was proved in~\cite{EKR61}; see \cite{Katona72} for a simplified proof and~\cite{DF83} for a survey of several generalizations and analogues of this theorem.

Part of our proof (Lemma~\ref{lemma:packing-number}) also requires constructing a large \emph{packing}, i.e., a subset of $\binom{[d]}{s}$ such that the size of any pairwise intersection among the family is at most $\Delta - 1$.\footnote{This is also called a ``combinatorial design'' in the TCS literature.} It is easy to see that the size of any packing is at most $\frac{\binom{d}{\Delta}}{\binom{s}{\Delta}}$: every member in the packing contains $\binom{s}{\Delta}$ subsets of size $\Delta$, and no two members may share the same size-$\Delta$ subset.
A Theorem of R\"odl~\cite{Rodl85} shows that this bound is asymptotically tight: for fixed $s, \Delta$ and as $d \to +\infty$, the largest packing has size $(1 - o(1))\cdot\frac{\binom{d}{\Delta}}{\binom{s}{\Delta}}$. Unfortunately, this theorem cannot be directly applied, since it does not hold uniformly for all values of $(d, s, \Delta)$ (even up to a constant factor). Instead, we use a well-known construction based on polynomials (which, e.g., was used by Nisan~\cite{Nisan91} for constructing pseudorandom generators) that nearly achieves this limit in the $s = O(\sqrt{d})$ regime.

\section{Proof of the Sparse Recovery Guarantee}
In this section, we prove Theorem~\ref{thm:sparse-recovery}, which states that every optimum $x^*$ of the Basis Pursuit (BP) program below is close to the clean data $x$ with high probability.
\begin{align*}
    \min_{\hat x} \quad&\|\hat x - \tilde x\|_1\\
    \text{subject to}\quad& \hat x \in U
\end{align*}

\subsection{A Necessary Condition for Large Error}
We start by stating a simple necessary condition for BP to incur a large error. Recall that in Problem~\ref{prob:sparse}, $U$ is the subspace in which $x$ lies and $S \subseteq [d]$ is the set of corrupted coordinates.

\begin{lemma}\label{lemma:nec-cond}
    In the setup of Problem~\ref{prob:sparse} with $B = 1$, if BP incurs an $\ell_1$ error of $t > 0$, $S$ is $t$-dominant with respect to $U$ in the sense that there exists $u \in U$ such that $\|u\|_1 = 1$ and
    \[
        \sum_{i \in S}|u_i|\cdot\1{|u_i| \le \frac{1}{t}} \ge \frac{1}{2}.
    \]
\end{lemma}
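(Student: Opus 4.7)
The plan is to exhibit a witness of $t$-dominance directly from the error vector produced by BP. Set $\Delta \coloneqq x^* - x$ and $T \coloneqq \|\Delta\|_1 \ge t$, and define $u \coloneqq \Delta/T$; since $x, x^* \in U$ we have $\Delta \in U$ and $\|u\|_1 = 1$, so I only need to verify the sum condition. Write $\eta \coloneqq \tilde x - x$, so that $\eta_i = 0$ for $i \notin S$ and $|\eta_i| \le 1$ for $i \in S$. After substituting $\hat x = x + v$, the BP program becomes $\min_{v \in U}\|v - \eta\|_1$, whose minimizer is $\Delta$. Probing optimality along the perturbations $v = (1 \pm \epsilon)\Delta$ (which remain in $U$) and letting $\epsilon \to 0^+$ yields the first-order inequality
\[
    \left|\sum_{i:\, \Delta_i \ne \eta_i} \Delta_i \cdot \sgn(\Delta_i - \eta_i)\right| \le \sum_{i:\, \Delta_i = \eta_i} |\Delta_i|,
\]
which is the only consequence of optimality I will use.

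Next, I partition the mass of $\Delta$ into four pieces according to where the coordinate lies and how the signs of $\Delta_i$ and $\Delta_i - \eta_i$ interact: $A \coloneqq \sum_{i \notin S}|\Delta_i|$ (contributing with sign $+$ on the LHS above, since $\eta_i = 0$ forces $\sgn(\Delta_i - \eta_i) = \sgn(\Delta_i)$); $B \coloneqq \sum |\Delta_i|$ over $i \in S$ with $\Delta_i \ne \eta_i$ and $\sgn(\Delta_i) = \sgn(\Delta_i - \eta_i)$ (also contributing $+$); $C \coloneqq \sum |\Delta_i|$ over $i \in S$ with $\Delta_i \ne \eta_i$ and opposite signs (contributing $-$); and $D \coloneqq \sum |\Delta_i|$ over $i \in S$ with $\Delta_i = \eta_i$. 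Coordinates with $\Delta_i = 0$ contribute nothing to any term, so $A + B + C + D = T$, and the optimality inequality rewrites as $|A + B - C| \le D$. Its upper branch $A + B - C \le D$ gives $C + D \ge A + B = T - (C + D)$, whence $C + D \ge T/2$.

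The final step is to check that every coordinate contributing to $C + D$ satisfies $|\Delta_i| \le 1$, which is where the boundedness $|\eta_i| \le 1$ is used. For $i$ in the $D$-part, $|\Delta_i| = |\eta_i| \le 1$. For $i$ in the $C$-part, the sign flip in $\Delta_i - \eta_i$ relative to $\Delta_i$ forces $|\eta_i| > |\Delta_i|$: e.g., $\Delta_i > 0$ combined with $\Delta_i - \eta_i < 0$ gives $\eta_i > \Delta_i > 0$, so $|\Delta_i| < |\eta_i| \le 1$; the $\Delta_i < 0$ case is symmetric. Therefore
\[
    \sum_{i \in S,\ |\Delta_i| \le 1}|\Delta_i| \ge C + D \ge T/2,
\]
and dividing through by $T$ yields $\sum_{i \in S}|u_i|\cdot \1{|u_i| \le 1/T} \ge 1/2$. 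Since $T \ge t$ implies $\{i : |u_i| \le 1/T\} \subseteq \{i : |u_i| \le 1/t\}$, the same lower bound holds with $1/t$ in place of $1/T$, so $u$ witnesses the $t$-dominance of $S$. The only delicate point is the sign bookkeeping that splits $S$ into $B$, $C$, $D$; after that the arithmetic is mechanical, and no properties of $U$ beyond $\Delta \in U$ are invoked.
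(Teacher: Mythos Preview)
Your proof is correct and follows essentially the same approach as the paper's: both restrict the BP optimality to the one-dimensional family $\{\alpha\Delta : \alpha \in \R\}$ (equivalently, $\Span(u)$) and extract the first-order condition at the optimum. The paper reduces to $x=0$ and phrases the resulting condition as a weighted-median statement (the weight on $[t,\infty)$ must be at least $1/2$), whereas you compute the subgradient directly via the $A,B,C,D$ sign decomposition; these are the same inequality in different clothing.
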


Note that when $t \le 1$, the inequality in Lemma~\ref{lemma:nec-cond} reduces to $\|u|_S\|_1 \ge 1/2$, which is exactly the negation of the restricted nullspace property (Definition~\ref{definition:RNP}).

\begin{proof}
    We first note that it is sufficient to analyze the $x = 0$ case, since both the problem setting and the BP method are invariant under a translation within $U$. Formally, let $\tilde x$ be the corrupted version of $x$, and $x^*$ be an optimum of BP (Program~\eqref{eq:basis-pursuit}). Then, for the alternative clean data $x' = 0$, $(\tilde x)' \coloneqq \tilde x - x$ is a valid corrupted version of $x'$ (in the setup of Problem~\ref{prob:sparse}), and $(x^*)' \coloneqq x^* - x$ would be an optimum of BP. The error would still be $\|(x^*)' - x'\|_1 = \|x^* - x\|_1 = t$.

    Let $\tilde x$ be a corrupted copy of the clean data $x = 0$. Let $x^* \in \argmin_{\hat x \in U}\|\hat x - \tilde x\|_1$ be an optimum with $\|x^*\|_1 = t$ and $u \coloneqq \frac{x^*}{\|x^*\|_1}$. Clearly, $u \in U$ and $\|u\|_1 = 1$. Furthermore, since $tu = x^* \in \argmin_{\hat x \in U}\|\hat x - \tilde x\|_1$, we have $tu \in \argmin_{\hat x \in \Span(\{u\})}\|\hat x - \tilde x\|_1$, or equivalently,
    \[
        t \in \argmin_{\alpha \in \R}\|\alpha u - \tilde x\|_1.
    \]
    The objective function above can be expanded as
    \[
        \|\alpha u - \tilde x\|_1
    =   \sum_{i=1}^{d}|u_i\alpha - \tilde x_i|
    =   \sum_{i \in [d]: u_i \ne 0}|u_i|\cdot\left|\alpha - \frac{\tilde x_i}{u_i}\right| + \sum_{i \in [d]: u_i = 0}\left|\tilde x_i\right|.
    \]
    This is equivalent to finding the weighted median of $d$ points on a line, where the $i$-th point is at location $\frac{\tilde x_i}{u_i}$ and has a weight of $|u_i|$. Therefore, in order for $t$ be a minimum, the weight in the region $[t, +\infty)$ must be at least half of the total weight (which is $\|u\|_1 = 1$), i.e.,
    \[
        \sum_{i=1}^d|u_i|\cdot\1{\frac{\tilde x_i}{u_i} \ge t} \ge \frac{1}{2}.
    \]
    Recall that in Problem~\ref{prob:sparse}, $\tilde x$ is required to satisfy: (1) $|\tilde x_i| \le B = 1$; (2) $\tilde x_i = 0$ for $i \notin S$. Thus, the left-hand side above is maximized when we pick $\tilde x_i = 0$ for $i \notin S$ and $\tilde x_i = \sgn(u_i)$ for every $i \in S$. This gives the desired inequality $\sum_{i \in S}|u_i|\cdot\1{|u_i| \le \frac{1}{t}} \ge \frac{1}{2}$.
\end{proof}

\subsection{A Combinatorial Constraint on Dominant Subsets}
It remains to show that, for large $t$ and \emph{any} low-dimensional subspace $U$, over the randomness in the corruption set $S$, $S$ is unlikely to be $t$-dominant as defined in Lemma~\ref{lemma:nec-cond}. Formally, we fix the $k$-dimensional subspace $U \subseteq \R^d$, size $s \in [d]$ and the error parameter $t > 0$. Define $\F_{t}^{(s)}$ as the family of size-$s$ subsets of $[d]$ that are $t$-dominant, i.e.,
\[
    \F_t^{(s)} \coloneqq
\left\{S \in \binom{[d]}{s}: \exists u \in U\text{ such that }\|u\|_1 = 1, \sum_{i \in S}|u_i|\cdot\1{|u_i| \le 1/t} \ge \frac{1}{2}\right\}.
\]
To upper bound the size of $\F_t^{(s)}$, we give the following combinatorial property of this family.

\begin{lemma}\label{lemma:comb-constraint}
    Let $\alpha \in \left(0, \frac{1}{2}\right)$,
    $k' \coloneqq \left\lfloor\frac{4(1-\alpha)k}{\alpha}\right\rfloor + 1$, $t' \coloneqq \lfloor(1/2 - \alpha)t\rfloor$ and $\Delta \coloneqq \left\lfloor\frac{t'}{k'-1}\right\rfloor$. For any $k'$ sets $S_1, S_2, \ldots, S_{k'} \in \F_t^{(s)}$:
    \begin{itemize}
        \item There are no pairwise disjoint sets $T_1, T_2, \ldots, T_{k'}$ such that $T_i \subseteq S_i$ and $|S_i \setminus T_i| \le t'$ for every $i \in [k']$.
        \item In particular, there must be a pair $S_i$ and $S_j$ such that $|S_i \cap S_j| > \Delta$ and $i \ne j$.
    \end{itemize}
\end{lemma}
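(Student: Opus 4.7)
The plan is to prove the first bullet; the second then follows by a quick pigeonhole argument. If $|S_i \cap S_j| \le \Delta$ for every pair, then for each $i$ we have $\left|S_i \cap \bigcup_{j \ne i} S_j\right| \le (k'-1)\Delta \le t'$ by the choice of $\Delta$, so setting $T_i := S_i \setminus \bigcup_{j \ne i} S_j$ yields pairwise disjoint $T_i \subseteq S_i$ with $|S_i \setminus T_i| \le t'$, contradicting the first bullet.

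For the first bullet, I will argue by contradiction: assume disjoint $T_1, \ldots, T_{k'}$ as above exist. Each $S_i \in \F_t^{(s)}$ is witnessed by some $u^{(i)} \in U$ with $\|u^{(i)}\|_1 = 1$ and $\sum_{j \in S_i}|u^{(i)}_j|\cdot\1{|u^{(i)}_j| \le 1/t} \ge 1/2$. Restricting this sum from $S_i$ to $T_i$ deletes at most $t'$ terms, each of value at most $1/t$, so the set $T'_i := \{j \in T_i : |u^{(i)}_j| \le 1/t\}$ still carries $\ell_1$-mass $\sum_{j \in T'_i}|u^{(i)}_j| \ge 1/2 - t'/t \ge \alpha$ by the choice of $t'$.

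Mimicking the warmup construction in Section~\ref{sec:approach}, I would form a $k'\times k'$ matrix $A'$ with entries $A'_{i\ell} := \sum_{j \in T'_\ell}\sgn(u^{(\ell)}_j)\cdot u^{(i)}_j$, where the sign flips are chosen so that the diagonal $A'_{ii} = \sum_{j \in T'_i}|u^{(i)}_j| \ge \alpha$. Factoring $A' = MV$ with $M \in \R^{k'\times d}$ the matrix with rows $u^{(i)}$ and $V \in \R^{d\times k'}$ the matrix whose $\ell$-th column is the signed indicator of $T'_\ell$ shows $\rank(A') \le \rank(M) \le \dim(U) = k$. Moreover, since the $T'_\ell$'s are pairwise disjoint, the off-diagonal row sums satisfy $r_i := \sum_{\ell \ne i}|A'_{i\ell}| \le \|u^{(i)}\|_1 - A'_{ii} = 1 - A'_{ii}$, which I would combine with $A'_{ii} \ge \alpha$ to obtain $r_i/A'_{ii} \le (1-\alpha)/\alpha$.

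The main obstacle is now deriving a contradiction with $\rank(A') \le k$ from this only \emph{weak} diagonal dominance---$A'_{ii}$ can well be smaller than $r_i$, so Gershgorin is not directly applicable. My plan is a randomized sub-sampling: include each index in $[k']$ independently with probability $p := \alpha/(2(1-\alpha))$ into a random set $I$, and retain $J := \{i \in I : A'_{ii} > \sum_{\ell \in I\setminus\{i\}}|A'_{i\ell}|\}$. The principal submatrix $A'|_{J \times J}$ is then strictly diagonally dominant, hence invertible by Gershgorin, yielding $\rank(A') \ge |J|$. A short conditional Markov bound gives $\Pr[i \in J] \ge p\left(1 - p\cdot r_i/A'_{ii}\right) \ge p/2 = \alpha/(4(1-\alpha))$, so $\mathbb{E}|J| \ge k'\cdot\alpha/(4(1-\alpha)) > k$ by the choice of $k'$. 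Consequently some realization satisfies $|J| \ge k+1$, contradicting $\rank(A') \le k$.
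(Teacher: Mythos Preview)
Your proposal is correct and follows essentially the same approach as the paper's proof: reduce the second bullet to the first via pigeonhole, build a $k'\times k'$ matrix of rank at most $k$ from the witnesses $u^{(i)}$ with diagonal entries $\ge\alpha$ and row $\ell_1$-norms $\le 1$, and then use random subsampling at rate $p=\alpha/(2(1-\alpha))$ together with Markov to extract a strictly diagonally dominant principal submatrix of expected size exceeding $k$. The minor cosmetic differences (working with the further restricted sets $T'_i$ and with the ratios $r_i/A'_{ii}$ rather than first normalizing rows) are immaterial.
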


Concretely, by setting $\alpha = 1/4$, we get $k' = 12k + 1$, $t' = \lfloor t/4\rfloor$ and $\Delta = \lfloor t/(48k)\rfloor$. In general, for every fixed $\alpha$, the first part of Lemma~\ref{lemma:comb-constraint} says that we cannot find $\Omega(k)$ sets in $\F_t^{(s)}$ that become disjoint after removing $O(t)$ elements from each, while the ``in particular'' part says that $\F_t^{(s)}$ cannot contain $\Omega(k)$ sets such that every pairwise intersection is of size $O(t/k)$. 

\begin{proof}
    Suppose towards a contradiction that $S_1, \ldots, S_{k'}$ are $k'$ sets in $\F_t^{(s)}$ that violate either of the two conditions. We will derive a contradiction in the following three steps: First, we show that violating the second condition implies a violation of the first. Then, we use the fact that the sets are in $\F_t^{(s)}$, together with the violation of the first condition, to obtain a matrix with a diagonal dominance property. Such a matrix is constructed so that its rank is at most the dimension of $U$. Finally, we show that the diagonal dominance implies a lower bound on the rank of the matrix, which contradicts the assumption that $U$ is low-dimensional.
    
    \paragraph{Reduce to the first condition.} Suppose that the second condition is violated, i.e., there are sets $S_1, \ldots, S_{k'} \in \F_t^{(s)}$ with small ($\le \Delta$) pairwise intersections. Then, we define
    \[
        T_i
    \coloneqq S_i\setminus\left(\bigcup_{j\in[k']\setminus \{i\}}S_j\right).
    \]
    Clearly, $T_1, T_2, \ldots, T_{k'}$ are pairwise disjoint and each $T_i$ is a subset of $S_i$. Furthermore, $|S_i \setminus T_i|$ is at most $\sum_{j \in [k']\setminus \{i\}}|S_i \cap S_j| \le (k'-1)\Delta \le t'$. This shows that the first condition is also violated.
    
    \paragraph{Construct a diagonally dominant matrix.} For each $i \in [k']$, let $u^{(i)} \in U$ be a witness of $S_i \in \F_t^{(s)}$, i.e., $\left\|u^{(i)}\right\|_1 = 1$ and
    \begin{equation}\label{eq:dominance-of-S_i}
        \sum_{j\in S_i}|u^{(i)}_j|\cdot\1{|u^{(i)}_j| \le 1/t} \ge \frac{1}{2}.
    \end{equation}
    It follows that
    \begin{align*}
            \sum_{j\in T_i}|u^{(i)}_j|
    &\ge    \sum_{j\in T_i}|u^{(i)}_j|\cdot\1{|u^{(i)}_j| \le 1/t}\\
    &=      \sum_{j\in S_i}|u^{(i)}_j|\cdot\1{|u^{(i)}_j| \le 1/t} - \sum_{j\in S_i \setminus T_i}|u^{(i)}_j|\cdot\1{|u^{(i)}_j| \le 1/t} \tag{$T_i \subseteq S_i$}\\
    &\ge    \frac{1}{2} - |S_i \setminus T_i|\cdot\frac{1}{t} \tag{Equation~\eqref{eq:dominance-of-S_i}}\\
    &\ge    \frac{1}{2} - \left(\frac{1}{2} - \alpha\right) = \alpha. \tag{$|S_i \setminus T_i| \le (1/2 - \alpha)t$}
    \end{align*}
    In words, an $\alpha$-fraction of the $1$-norm of $u^{(i)}$ is contributed by the coordinates in $T_i$. 
    
    Now, we construct a $k'\times k'$ matrix $A$ using the vectors $u^{(1)}, \ldots, u^{(k')}$, and then argue that the rank of $A$ is at most $k$. Roughly speaking, each entry $A_{ij}$ is obtained by aggregating the entries $\left\{u^{(i)}_l: l \in T_j\right\}$ with some of the entries negated, so that the diagonal entries of $A$ are maximized. Formally, we pick $s \in \{\pm 1\}^d$ such that:
    \[
        s_j = \begin{cases}
            +1, & j \in T_i~\text{and}~u^{(i)}_j \ge 0~\text{for some}~i,\\
            -1, & j \in T_i~\text{and}~u^{(i)}_j < 0~\text{for some}~i,\\
            \text{arbitrary}, & j \notin \bigcup_{i\in[k']} T_i.
        \end{cases}
    \]
    Then, we let $A_{ij}
    \coloneqq \sum_{l \in T_j}s_ju^{(i)}_j$.
    
    It is easy to verify that $A$ is well-defined, and the following hold for every $i \in [k']$:
    \begin{itemize}
        \item $A_{ii} = \sum_{j \in T_i}\left|u^{(i)}_j\right| \ge \alpha$.
        \item $\sum_{j=1}^{k'}|A_{ij}| \le \sum_{j \in [k']}\sum_{l \in T_j}|u^{(i)}_l| \le \left\|u^{(i)}\right\|_1 = 1$.
    \end{itemize}
    Furthermore, $A$ can be obtained from the $k' \times d$ matrix
    \[
        \begin{bmatrix}
            u^{(1)} & u^{(2)} & \cdots & u^{(k')}
        \end{bmatrix}^{\top}
    \]
    via a sequence of elementary column operations (specifically, swapping two columns and adding a column or its negation to another column) and column deletions. Thus, the rank of $A$ is at most $\dim U = k$.
    
    \paragraph{Lower bound the rank via a probabilistic argument.} We scale the rows of $A$, so that the absolute values of the entries in each row sum up to $1$. Clearly, in the resulting matrix, we still have $A_{ii} \ge \alpha$ for each $i \in [k']$.
    
    We consider a randomly chosen set $J \subseteq [k']$, such that each element is included in $J$ with probability $p \coloneqq \frac{\alpha}{2\cdot(1-\alpha)}$ independently. We say that index $i \in [k']$ is \emph{dominant} if: (1) $i \in J$; (2) $\sum_{j \in J \setminus \{i\}}|A_{ij}| < \alpha$. By our choice of $J$, the two events are independent. The probability of the former is $p$, while the probability of the latter is, by Markov's inequality, at least
    \[
    1 - \frac{\Ex{J}{\sum_{j \in J \setminus \{i\}}|A_{ij}|}}{\alpha}
    =   1 - \frac{p \cdot \sum_{j \in [k']\setminus\{i\}}|A_{ij}|}{\alpha}
    \ge 1 - \frac{p\cdot(1 - \alpha)}{\alpha} = \frac{1}{2}.
    \]
    Therefore, each index $i$ is dominant with probability at least $p \cdot \frac{1}{2} = \frac{\alpha}{4\cdot(1 - \alpha)}$.
    
    By an averaging argument, there exists a deterministic set $J \subseteq [k']$ such that the number of dominant indices is at least $\left\lceil \frac{\alpha}{4\cdot(1 - \alpha)}\cdot k'\right\rceil > k$. Let $J' \subseteq J$ be the subset of dominant indices. Then, the $|J'|\times|J'|$ submatrix of $A$ induced by $J'$ is diagonally dominant, and thus full-rank by the Gershgorin circle theorem. This implies $\rank(A) \ge |J'| > k$, which contradicts our ealier conclusion that $\rank(A) \le k$.
\end{proof}

\subsection{Bounding the Number of Dominant Subsets}
Next, we upper bound $|\F_t^{(s)}|$ using Lemma~\ref{lemma:comb-constraint}. We use the following definition of packings.
\begin{definition}\label{def:packing}
    A $(d, s, \Delta)$-packing is a subset of $\binom{[d]}{s}$ such that the intersection of any two different sets has size at most $\Delta - 1$.
\end{definition}

With Definition~\ref{def:packing}, the ``in particular'' part of Lemma~\ref{lemma:comb-constraint} simply states that ``$\F_t^{(s)}$ does not contain a $(d, s, \Delta + 1)$-packing of size $k'$''.

Let $m(d, s, \Delta)$ denote the size of the maximum $(d, s, \Delta)$-packing. The following lemma gives an upper bound on $|\F_t^{(s)}|$ in terms of the packing size.

\begin{lemma}\label{lemma:size-upper-bound}
    For $k'$ and $\Delta$ chosen as in Lemma~\ref{lemma:comb-constraint},
    \[
        |\F_t^{(s)}|\le \frac{k' - 1}{m(d, s, \Delta + 1)} \cdot \binom{d}{s}.
    \]
\end{lemma}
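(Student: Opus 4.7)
The plan is to run a double-counting argument that generalizes the textbook bound on Erd\H{o}s--type matching problems sketched informally in Section~\ref{sec:approach}. The role previously played by a uniformly random family of pairwise disjoint sets of size $\lfloor d/s\rfloor$ will now be played by a uniformly random embedding of a maximum $(d,s,\Delta+1)$-packing.

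Concretely, let $M \coloneqq m(d,s,\Delta+1)$ and fix any maximum packing $\mathcal{P} = \{P_1,\ldots,P_M\} \subseteq \binom{[d]}{s}$, so every two distinct $P_i,P_j$ satisfy $|P_i\cap P_j|\le \Delta$. I would then sample a uniformly random permutation $\pi:[d]\to[d]$ and consider the randomized family $\pi(\mathcal{P}) \coloneqq \{\pi(P_1),\ldots,\pi(P_M)\}$. Two observations drive the argument. First, since $\pi$ preserves intersection sizes, $\pi(\mathcal{P})$ remains a $(d,s,\Delta+1)$-packing; in particular any $k'$ of its members would constitute a $(d,s,\Delta+1)$-packing of size $k'$. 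Second, for each fixed $P_i$, the set $\pi(P_i)$ is uniformly distributed on $\binom{[d]}{s}$, so $\Pr_\pi[\pi(P_i)\in\F_t^{(s)}] = |\F_t^{(s)}|/\binom{d}{s}$.

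Putting these together, linearity of expectation gives
\[
    \Ex{\pi}{|\F_t^{(s)}\cap \pi(\mathcal{P})|}
    = \sum_{i=1}^{M}\pr{\pi}{\pi(P_i)\in\F_t^{(s)}}
    = M\cdot\frac{|\F_t^{(s)}|}{\binom{d}{s}}.
\]
On the other hand, the ``in particular'' conclusion of Lemma~\ref{lemma:comb-constraint} says that $\F_t^{(s)}$ contains no $(d,s,\Delta+1)$-packing of size $k'$. Hence, deterministically for every $\pi$, we must have $|\F_t^{(s)}\cap \pi(\mathcal{P})|\le k'-1$; otherwise a subcollection of $k'$ sets in $\F_t^{(s)}\cap \pi(\mathcal{P}) \subseteq \pi(\mathcal{P})$ would give such a packing and contradict Lemma~\ref{lemma:comb-constraint}. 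Combining the two bounds yields $M\cdot|\F_t^{(s)}|/\binom{d}{s} \le k'-1$, which rearranges to the claimed inequality $|\F_t^{(s)}|\le \tfrac{k'-1}{m(d,s,\Delta+1)}\binom{d}{s}$.

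There is no real obstacle here: the proof is essentially a one-line averaging once the right random object is identified, and the only care needed is in the two invariances (packing property and marginal uniformity) of the random image $\pi(\mathcal{P})$. The nontrivial content of the lemma has already been absorbed into Lemma~\ref{lemma:comb-constraint}, which supplies the forbidden-configuration statement, and into the choice of a maximum packing, whose existence is guaranteed by the definition of $m(d,s,\Delta+1)$.
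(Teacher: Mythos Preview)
Your argument is correct and is essentially identical to the paper's proof: both fix a maximum $(d,s,\Delta+1)$-packing, apply a uniformly random permutation $\pi$, and double-count $\Ex{\pi}{|\F_t^{(s)}\cap\pi(\mathcal{P})|}$ using the deterministic cap $k'-1$ from Lemma~\ref{lemma:comb-constraint} on one side and marginal uniformity of each $\pi(P_i)$ on the other. The only cosmetic difference is that the paper writes the expectation as $\sum_{S\in\F_t^{(s)}}\sum_i \pr{\pi}{S=\pi(A_i)}$ rather than $\sum_i \pr{\pi}{\pi(P_i)\in\F_t^{(s)}}$, which is the same sum with the order reversed.
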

Lemma~\ref{lemma:size-upper-bound} implies that whenever $m(d, s, \Delta + 1) \gg k'$, $\F_t^{(s)}$ only contains a negligible fraction of $\binom{[d]}{s}$. The lemma is proved by studying the intersection between $\F_t^{(s)}$ and a $(d, s, \Delta + 1)$-packing after a random permutation over the elements. This technique appears in a proof of the Erd\"os-Ko-Rado theorem~\cite{EKR61} due to Katona~\cite{Katona72}; also see \cite[Section 7.2]{Jukna11} for an exhibition.

\begin{proof}
    Let $M \coloneqq m(d, s, \Delta + 1)$ and $\{A_1, A_2, \ldots, A_M\}$ be a $(d, s, \Delta + 1)$-packing. Let $\pi$ be a permutation of $[d]$ chosen uniformly at random. Note that for each $i \in [M]$, $\pi(A_i) \coloneqq \{\pi(e): e \in A_i\}$ is still a size-$s$ subset of $[d]$. We examine the quantity
    \[
        \Ex{\pi}{\sum_{S \in \F_t^{(s)}}\sum_{i=1}^{M}\1{S = \pi(A_i)}}.
    \]
    On one hand, for each fixed $\pi$, $\pi(A_1), \pi(A_2) \ldots, \pi(A_M)$ is still a $(d, s, \Delta + 1)$-packing, and the double summation inside the expectation is simply the size of $\F_t^{(s)} \cap \{\pi(A_1), \pi(A_2) \ldots, \pi(A_M)\}$. Such an intersection must have size at most $k' - 1$ in light of Lemma~\ref{lemma:comb-constraint}.
    On the other hand, the quantity is equal to, by the linearity of expectation,
    \[
        \sum_{S \in \F_t^{(s)}}\sum_{i=1}^{M}\pr{\pi}{S = \pi(A_i)}
    =   \frac{|\F_t^{(s)}|\cdot M}{\binom{d}{s}}.
    \]
    Therefore, we have $|\F_t^{(s)}| \le \frac{k' - 1}{M}\cdot\binom{d}{s}$.
\end{proof}

Finally, the following lemma states two simple lower bounds on the packing number $m(d, s, \Delta)$.

\begin{lemma}\label{lemma:packing-number}
    Suppose that $q$ is a prime power, $1 \le \Delta \le s \le q$ and $d \ge sq$. We have
    \[
        m(d, s, \Delta) \ge q^\Delta.
    \]
    Furthermore, for any $d$ and $s$, we have $m(d, s, 1) \ge \lfloor d/s\rfloor$.
\end{lemma}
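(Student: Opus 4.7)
The plan is to handle the two parts separately. The second bound $m(d, s, 1) \ge \lfloor d/s\rfloor$ is essentially trivial: simply partition $[d]$ into $\lfloor d/s\rfloor$ blocks of size $s$ (discarding any leftover elements). Distinct blocks are disjoint, so every pairwise intersection has size $0 \le 1 - 1$, confirming that this collection is a $(d, s, 1)$-packing.

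For the first bound, I would use the classical polynomial-based construction (the one used, e.g., by Nisan for PRGs, as noted in the related work). Fix an $s$-element subset $X \subseteq \bbF_q$; this is possible because $s \le q$. Identify $X \times \bbF_q$, which has size $sq \le d$, with a subset of $[d]$ via any injection. To each polynomial $p \in \bbF_q[z]$ of degree less than $\Delta$, associate the ``graph'' set
\[
    A_p \coloneqq \{(x, p(x)) : x \in X\} \subseteq X \times \bbF_q.
\]
Each $A_p$ has size exactly $s$ (one point per $x \in X$), and the number of such polynomials is $q^\Delta$, giving $q^\Delta$ candidate sets.

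The key step is the intersection bound. For two distinct polynomials $p \ne p'$ of degree $< \Delta$, the difference $p - p'$ is a nonzero polynomial of degree $< \Delta$, so it has at most $\Delta - 1$ roots in $\bbF_q$. Hence $p$ and $p'$ agree on at most $\Delta - 1$ values of $x$, which means $|A_p \cap A_{p'}| \le \Delta - 1$. Thus $\{A_p : \deg p < \Delta\}$ is a $(d, s, \Delta)$-packing of size $q^\Delta$, yielding $m(d, s, \Delta) \ge q^\Delta$.

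No real obstacle is anticipated; the only thing to be careful about is that the construction lives inside a ground set of size $sq$, which is accommodated by the hypothesis $d \ge sq$, and that $s \le q$ ensures we can pick an $s$-element subset $X$ of $\bbF_q$ in the first place. The condition $\Delta \le s$ is used only implicitly, to ensure the parameter regime is sensible (since $\Delta > s$ would make the packing condition vacuous and the bound $q^\Delta$ pointless).
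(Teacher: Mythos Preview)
Your proposal is correct and essentially identical to the paper's proof: the paper also partitions $[d]$ into $\lfloor d/s\rfloor$ disjoint blocks for the second bound, and for the first bound uses the same Reed--Solomon style construction, associating to each degree-$<\Delta$ polynomial over $\bbF_q$ its graph on $s$ evaluation points and bounding pairwise intersections via the number of roots of the difference polynomial. The only cosmetic difference is that the paper indexes evaluation points by $[s]$ rather than by a subset $X\subseteq\bbF_q$, which is immaterial.
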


\begin{proof}
    Let $\bbF_q$ be the finite field of size $q$.
    For each $a = (a_0, a_1, \ldots, a_{\Delta - 1}) \in \bbF_q^\Delta$, we consider the polynomial $p_a(x) \coloneqq a_0 + a_1x + \cdots + a_{\Delta-1}x^{\Delta-1}$ over $\bbF_q$. Since $s \le q$, we can find $s$ distinct elements $x_1, x_2, \ldots, x_s \in \bbF_q$. We then define the set associated with $a$ as
    \[
        S_a \coloneqq \{(1, p_a(x_1)), (2, p_a(x_2)), \ldots, (s, p_a(x_s))\}.
    \]
    Clearly, $S_a$ is a size-$s$ subset of $[s]\times\bbF_q$. Furthermore, for $a \ne a'$, since $p_a(x) - p_{a'}(x)$ is a non-zero polynomial of degree $< \Delta$, $p_a(x)$ and $p_{a'}(x)$ agree on at most $\Delta - 1$ points. This implies $|S_a \cap S_{a'}| \le \Delta - 1$. Since $d \ge sq$, via an arbitrary injection from $[s] \times \bbF_q$ to $[d]$, $\{S_a: a\in\bbF_q^{\Delta}\}$ gives a $(d, s, \Delta)$-packing of size $q^\Delta$.

    Finally, the ``furthermore'' part follows from the $(d, s, 1)$-packing given by $\{1, 2, \ldots, s\}$, $\{s + 1, s + 2, \ldots, 2s\}$, $\ldots$, $\{s(\lfloor d / s\rfloor -1) + 1, \ldots, s\lfloor d / s\rfloor\}$.
\end{proof}

Combining Lemmas \ref{lemma:nec-cond}, \ref{lemma:comb-constraint}, \ref{lemma:size-upper-bound}~and~\ref{lemma:packing-number} immediately gives part of Theorem~\ref{thm:sparse-recovery}.
\begin{proof}[Proof of Theorem~\ref{thm:sparse-recovery} (the second and third bounds)]
    By Lemma~\ref{lemma:nec-cond} and the definition of $\F_t^{(s)}$, the probability $\pr{}{\|x^* - x\|_1 \ge t}$ is upper bounded by $|\F_t^{(s)}|/\binom{d}{s}$, which is, by Lemma~\ref{lemma:size-upper-bound}, at most $(k'-1)/m(d, s, \Delta + 1)$. Choosing $\alpha = 1/4$ in Lemma~\ref{lemma:comb-constraint} gives $k' = 12k + 1$, $\Delta = \lfloor t / (48k)\rfloor$ and
    \[
        \pr{}{\|x^* - x\|_1 \ge t}
    \le \frac{12k}{m(d, s, \lfloor t / (48k)\rfloor + 1)}.
    \]
    Now we apply Lemma~\ref{lemma:packing-number} by choosing $q$ as the unique power of $2$ in $\left(\frac{d}{2s}, \frac{d}{s}\right]$. Since $s \le \sqrt{d/2}$, we have $s \le \frac{d}{2s} < q$ and $d \ge sq$, i.e., the preconditions of Lemma~\ref{lemma:packing-number} are satisfied, and we get
    \[
        m(d, s, \lfloor t / (48k)\rfloor + 1)
    \ge q^{\lfloor t / (48k)\rfloor + 1}
    \ge \left(\frac{d}{2s}\right)^{\lfloor t / (48k)\rfloor + 1}.
    \]
    This gives the third bound $\pr{}{\|x^* - x\|_1 \ge t} \le 12k \cdot \left(\frac{2s}{d}\right)^{\lfloor t/(48k)\rfloor + 1}$.

    To prove the second bound, we take $t_0 = \min\{t, 24k\}$, and the above argument gives
    \[
        \pr{}{\|x^* - x\|_1 \ge t}
    \le \pr{}{\|x^* - x\|_1 \ge t_0}
    \le \frac{12k}{m(d, s, 1)}.
    \]
    By the ``furthermore'' part of Lemma~\ref{lemma:packing-number}, $m(d, s, 1) \ge \lfloor d / s\rfloor \ge d/(2s)$. This bounds the last term above by $24ks/d$.
\end{proof}

\subsection{Applying the Stricter Constraint}
To prove the first bound in Theorem~\ref{thm:sparse-recovery}, we need to apply the first part of Lemma~\ref{lemma:comb-constraint} (instead of the ``in particular'' part) to obtain the following alternative upper bound on $|\F_t^{(s)}|$.
\begin{lemma}\label{lemma:size-upper-bound-better}
    For $k'$ and $t'$ chosen as in Lemma~\ref{lemma:comb-constraint},
    \[
        |\F_t^{(s)}| \le \binom{s(k'-1)}{t'+1} \cdot \binom{d - t' - 1}{s - t' - 1}.
    \]
\end{lemma}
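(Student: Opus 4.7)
The plan is to exhibit a small ``kernel'' set $X \subseteq [d]$ with $|X| \le s(k' - 1)$ such that every $S \in \F_t^{(s)}$ satisfies $|S \cap X| \ge t' + 1$. Once such an $X$ is identified, the bound falls out from an elementary overcount: the number of $s$-subsets of $[d]$ containing some specified $(t' + 1)$-subset of $X$ is $\binom{d - t' - 1}{s - t' - 1}$, and there are at most $\binom{|X|}{t'+1} \le \binom{s(k'-1)}{t'+1}$ candidate $(t'+1)$-subsets of $X$. Summing (with overcounting, which is harmless for an upper bound) yields $|\F_t^{(s)}| \le \binom{s(k'-1)}{t'+1}\binom{d - t' - 1}{s - t' - 1}$.

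To construct $X$, I would invoke the first (stronger) conclusion of Lemma~\ref{lemma:comb-constraint} via a maximal ``almost-matching'' argument. Let $\mathcal{M} = \{S_1, \ldots, S_m\} \subseteq \F_t^{(s)}$ be an inclusion-maximal subfamily for which there exist pairwise disjoint $T_i \subseteq S_i$ with $|S_i \setminus T_i| \le t'$ for every $i \in [m]$. Lemma~\ref{lemma:comb-constraint} forces $m \le k' - 1$, so setting $X \coloneqq S_1 \cup \cdots \cup S_m$ gives $|X| \le ms \le s(k'-1)$.

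The heart of the argument is checking that $X$ is indeed a kernel. Suppose for contradiction that some $S \in \F_t^{(s)}$ satisfies $|S \cap X| \le t'$, and define $T_{m+1} \coloneqq S \setminus X$. Then $T_{m+1}$ is disjoint from $X$ and therefore from each $T_i \subseteq X$, while $|S \setminus T_{m+1}| = |S \cap X| \le t'$. Hence $\mathcal{M} \cup \{S\}$ (with witnesses $T_1, \ldots, T_m, T_{m+1}$) enlarges the almost-matching, contradicting the maximality of $\mathcal{M}$. If $S \in \mathcal{M}$ already, then $S \subseteq X$ gives $|S \cap X| = s \ge t' + 1$ whenever the claimed bound is non-trivial (the case $s \le t'$ makes the right-hand side $\binom{d-t'-1}{s-t'-1}$ zero anyway, so nothing needs to be checked).

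I do not foresee a serious obstacle here: the argument is essentially a direct ``kernel extraction'' in the style of Erd\H{o}s matching-type proofs, with Lemma~\ref{lemma:comb-constraint} supplying the crucial cap $m \le k' - 1$. The only subtleties are to formulate maximality so that the putative extension by $S$ is a genuine contradiction (handled by fixing any maximum $\mathcal{M}$ at the outset), to use monotonicity of $\binom{|X|}{t'+1}$ in $|X|$ when replacing $|X|$ by the worst-case bound $s(k'-1)$, and to verify the edge case $s \le t'$ makes the asserted inequality vacuous.
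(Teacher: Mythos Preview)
Your proposal is correct and follows the same high-level ``kernel extraction'' strategy as the paper: take a maximal almost-matching $\{S_1,\ldots,S_m\}$ inside $\F_t^{(s)}$, set $X=\bigcup_i S_i$, argue every $S\in\F_t^{(s)}$ meets $X$ in at least $t'+1$ points, and overcount.

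Where you differ is in the step that shows $|S\cap X|\ge t'+1$. The paper phrases maximality in terms of being \emph{perfectly $(s-t')$-matched} (Definition~\ref{def:perfect-matching}) and, to derive a contradiction from $|T\cap X|\le t'$, invokes Hall's theorem (Lemma~\ref{lemma:halls-theorem}) to verify that $S_1,\ldots,S_{k^*},T$ are again perfectly $(s-t')$-matchable, checking Hall's condition for all $J\subseteq[k^*+1]$. Your argument is more elementary: you keep the existing witnesses $T_1,\ldots,T_m$ untouched and simply set $T_{m+1}\coloneqq S\setminus X$, which is automatically disjoint from every $T_i\subseteq X$ and satisfies $|S\setminus T_{m+1}|=|S\cap X|\le t'$. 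This bypasses Hall's theorem entirely and is a genuine simplification; the paper's detour through Hall is unnecessary once one notices that the old witnesses need not be rebuilt. Your handling of the edge cases (the $S\in\mathcal{M}$ case, and $s\le t'$ making the bound vacuous) is also fine; in fact $\F_t^{(s)}=\emptyset$ whenever $s<t/2$, which already forces $s>t'$.
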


To prove Lemma~\ref{lemma:size-upper-bound-better}, we define the following notion of a perfect matching and state a characterization of its existence from Hall's theorem.

\begin{definition}[Perfect $s$-matching.]\label{def:perfect-matching}
    A collection of $n$ sets $S_1, S_2, \ldots, S_n$ can be perfectly $s$-matched if there are pairwise disjoint sets $T_1 \in \binom{S_1}{s}$, $T_2 \in \binom{S_2}{s}$, $\ldots$, $T_n \in \binom{S_n}{s}$.
\end{definition}

With this definition, the main part of Lemma~\ref{lemma:comb-constraint} can be re-written as ``no $k'$ sets in $\F_t^{(s)}$ can be perfectly $(s - t')$-matched''.

We can view the above definition in terms of a bipartite graph $(X, Y, E)$, in which the vertex sets are $X = [n]$ and $Y = \bigcup_{i \in [n]}S_i$, while $(x, y) \in X \times Y$ is present in the edge set $E$ if and only if $y \in S_x$. The $s$-matching in the above definition then corresponds to a subset of $E$ such that each $X$-vertex has exactly $s$ incident edges, while each $Y$-vertex has at most one incident edge.

The following immediate corollary of Hall's marriage theorem characterizes the existence of perfect $s$-matchings.

\begin{lemma}[Hall's marriage theorem~\cite{Hall35}]\label{lemma:halls-theorem}
    $S_1, S_2, \ldots, S_n$ can be perfectly $s$-matched if and only if for every $J \subseteq [n]$,
    \[
        \left|\bigcup_{i \in J}S_i\right| \ge |J|\cdot s.
    \]
\end{lemma}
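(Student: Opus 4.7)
The plan is to reduce this generalized perfect-$s$-matching criterion to the classical Hall marriage theorem via a standard vertex-duplication construction. The ``only if'' direction is immediate: if $T_1, \ldots, T_n$ witness a perfect $s$-matching, then for every $J \subseteq [n]$ the union $\bigcup_{i \in J}T_i$ is contained in $\bigcup_{i \in J}S_i$ and has size exactly $|J|\cdot s$ by pairwise disjointness, which yields the stated inequality.

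For the converse, I would construct an auxiliary bipartite graph $H$ with left-side vertex set $X' \coloneqq [n] \times [s]$, right-side vertex set $Y \coloneqq \bigcup_{i=1}^{n}S_i$, and an edge connecting $(i, j) \in X'$ to $y \in Y$ exactly when $y \in S_i$. A matching in $H$ that saturates $X'$ then corresponds precisely to a perfect $s$-matching of $S_1, \ldots, S_n$: for each $i$, the matched partners of $(i, 1), \ldots, (i, s)$ form a size-$s$ subset $T_i \subseteq S_i$, and the $T_i$ are pairwise disjoint because each element of $Y$ appears in at most one edge of the matching.

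Next I would invoke the classical Hall marriage theorem on $H$: a matching saturating $X'$ exists iff $|N_H(J')| \ge |J'|$ for every $J' \subseteq X'$. Given such a $J'$, set $J \coloneqq \{i \in [n] : (i, j) \in J' \text{ for some } j \in [s]\}$. Since all $s$ duplicates of an index $i$ have identical neighborhood $S_i$ in $H$, we have $N_H(J') = \bigcup_{i \in J}S_i$, while $|J'| \le s|J|$. The hypothesis $|\bigcup_{i \in J}S_i| \ge s|J|$ therefore gives $|N_H(J')| \ge s|J| \ge |J'|$, verifying Hall's condition on $H$ and producing the desired $s$-matching.

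I do not anticipate a substantive obstacle here: the whole argument is the textbook reduction from the $b$-matching formulation of Hall's theorem to the ordinary version. The only point warranting care is the identity $N_H(J') = \bigcup_{i \in J}S_i$, which depends only on the set of first coordinates that appear in $J'$ and not on which copies are chosen; this follows immediately from the construction of $H$, so the only nontrivial ingredient invoked is the classical Hall theorem itself.
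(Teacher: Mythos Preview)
Your proposal is correct and follows essentially the same approach as the paper: both prove the ``only if'' direction by the obvious inclusion of the disjoint union of the $T_i$'s, and both handle the ``if'' direction by the vertex-duplication bipartite graph on $[n]\times[s]$ versus $\bigcup_i S_i$, then verify classical Hall's condition by projecting any subset of $[n]\times[s]$ onto its first coordinate. The arguments are interchangeable up to notation.
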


\begin{proof}
    The ``only if'' direction is easy: If a perfect $s$-matching is witnessed by $T_1, T_2, \ldots, T_n$, it holds for every $J \subseteq [n]$ that
    \[
        \left|\bigcup_{i \in J}S_i\right| \ge \left|\bigcup_{i \in J}T_i\right| = \sum_{i \in J}|T_i| = |J| \cdot s.
    \]

    For the ``if'' direction, suppose that the inequality holds for every $J \subseteq [n]$. We construct another bipartite graph $G = (X, Y, E)$ by representing each of the $n$ sets with $s$ vertices. Formally, $X = [n] \times [s]$, $Y = \bigcup_{i = 1}^{n}S_i$ and
    \[
        E = \{((i, j), y) \in X \times Y: y \in S_i\}.
    \]
    Let $N(x) \coloneqq \{y \in Y: (x, y) \in E\}$ denote the neighbours of $x \in X$.
    
    We claim that $|\bigcup_{x \in J}N(x)| \ge |J|$ holds for every $J \subseteq X$. Assuming this claim, we are done: Hall's marriage theorem shows that $G$ has a matching of size $|X|$. In other words, there exists an injection $f: [n]\times [s] \to \bigcup_{i=1}^{n}S_i$ such that $f(i, j) \in S_i$. Then, picking $T_i \coloneqq \{f(i, j): j \in [s]\}$ shows that $S_1, S_2, \ldots, S_n$ can be perfectly $s$-matched.
    
    To verify the claim, let $J' \coloneqq \{i \in [n]: \exists j \in [s], (i, j) \in J\}$ be the projection of $J$ onto the first coordinate. Clearly, $|J| \le |J'| \cdot s$. By our construction of $G$,
    \[
        \bigcup_{x \in J}N(x)
    =   \bigcup_{(i, j) \in J}S_i
    =   \bigcup_{i \in J'}S_i.
    \]
    Then, our assumption implies that the cardinality of the last set above is at least $|J'| \cdot s \ge |J|$. This establishes the inequality $|\bigcup_{x \in J}N(x)| \ge |J|$ and completes the proof.
\end{proof}

Now we are ready to prove Lemma~\ref{lemma:size-upper-bound-better}.

\begin{proof}[Proof of Lemma~\ref{lemma:size-upper-bound-better}]
    Let $k^*$ be the largest integer such that there exist $S_1, S_2, \ldots, S_{k^*} \in \F_t^{(s)}$ that can be perfectly $(s - t')$-matched. By Lemma~\ref{lemma:comb-constraint}, $k^*$ is at most $k' - 1$. Let $U \coloneqq S_1 \cup S_2 \cup \cdots \cup S_{k^*}$ be the union of the $k^*$ sets. We have $|U| \le sk^* \le s(k' - 1)$.
    
    In the rest of the proof, we will show that every $T \in \F_t^{(s)}$ must intersect $U$ at $\ge t' + 1$ elements. The lemma would then immediately follow, since every member of $\F_t^{(s)}$ can be specified by choosing a size-$(t'+1)$ subset of $U$ and then choosing $s - t' - 1$ from the remaining $d - t' - 1$ elements in $[d]$. The number of ways of doing this is at most
    \[
        \binom{|U|}{t' + 1} \cdot \binom{d - t' - 1}{s - t' - 1}
    \le \binom{s(k'-1)}{t' + 1} \cdot \binom{d - t' - 1}{s - t' - 1}
    \]
    as desired.

    Suppose towards a contradiction that some $T \in \F_t^{(s)}$ satisfies $|T \cap U| \le t'$. We write $S_{k^* + 1}\coloneqq T$ and show that $S_1$ through $S_{k^*+1}$ can also be perfectly $(s - t')$-matching, which violates the maximality of $k^*$. In light of Lemma~\ref{lemma:halls-theorem}, it suffices to prove $\left|\bigcup_{i \in J}S_i\right| \ge (s-t')|J|$ for every $J \subseteq [k^* + 1]$. If $k^* + 1 \notin J$, the assumption that $S_1, \ldots, S_{k^*}$ can be perfectly $(s - t')$-matched together with Lemma~\ref{lemma:halls-theorem} gives $\left|\bigcup_{i \in J}S_i\right| \ge (s - t')|J|$. If $k^* + 1 \in J$, the same argument gives
    \[
        \left|\bigcup_{i \in J\setminus \{k^* + 1\}}S_i\right| \ge (s - t')(|J| - 1).
    \]
    Furthermore, we have
    \[
        \left|S_{k^*+1} \cap \left(\bigcup_{i \in J\setminus \{k^* + 1\}}S_i\right)\right|
    \le \left|S_{k^*+1} \cap \left(\bigcup_{i \in [k^*]}S_i\right)\right|
    =   |T \cap U| \le t'.
    \]
    The two inequalities above together give
    \[
        \left|\bigcup_{i \in J}S_i\right|
    =   \left|\bigcup_{i \in J\setminus \{k^* + 1\}}S_i\right| + |S_{k^* + 1}| - \left|S_{k^*+1} \cap \left(\bigcup_{i \in J\setminus \{k^* + 1\}}S_i\right)\right|
    \ge (s - t')(|J| - 1) + s - t'
    =   (s - t')|J|.
    \]
    Therefore, applying Lemma~\ref{lemma:halls-theorem} again shows that $S_1, \ldots, S_{k^*+1}$ can be perfectly $(s - t')$-matched, which leads to a contradiction.
\end{proof}

Finally, we prove the first bound in Theorem~\ref{thm:sparse-recovery}.
\begin{proof}[Proof of Theorem~\ref{thm:sparse-recovery} (the first bound)]
    Lemmas \ref{lemma:nec-cond}~and~\ref{lemma:size-upper-bound-better} together imply that for some $k'$ and $t'$ chosen as in Lemma~\ref{lemma:comb-constraint}:
    \begin{align*}
        \pr{}{\|x^* - x\|_1 \ge t}
    &\le\frac{|\F_t^{(s)}|}{\binom{d}{s}} \tag{Lemma~\ref{lemma:nec-cond}}\\
    &\le\frac{\binom{s(k'-1)}{t'+1}\cdot\binom{d - t' - 1}{s - t' - 1}}{\binom{d}{s}}\tag{Lemma~\ref{lemma:size-upper-bound-better}}\\
    &\le\frac{[s(k'-1)]^{t'+1}}{(t'+1)!}\cdot\frac{s}{d}\cdot\frac{s-1}{d-1}\cdots\frac{s-t'}{d-t'} \tag{$\binom{n}{m} \le n^{m}/m!$ and $\binom{n}{m} = \frac{n}{m}\binom{n-1}{m-1}$}\\
    &\le\frac{1}{(t'+1)!}\cdot\left[\frac{s^2(k'-1)}{d}\right]^{t'+1}.
    \end{align*}
    The bound then follow from choosing $\alpha = 1/4$, $k' = 12k + 1$ and $t' = \lfloor t/4\rfloor$ in Lemma~\ref{lemma:comb-constraint}.
\end{proof}

\section{Recovery of a Gaussian Dataset}
In this section, we prove our result for robust subspace recovery (Theorem~\ref{thm:subspace-recovery}) and then apply it to solve the full recovery problem (Problem~\ref{prob:mean-est}).

\subsection{Preliminaries}
Our subspace recovery algorithm builds on the previous results on robust mean estimation and robust linear regression, which we state below. The following simple lemma states that the sample median is a robust mean estimator for one-dimension Gaussians.
\begin{lemma}\label{lemma:1-d-mean-est}
    Suppose that $x_1, x_2, \ldots, x_n \in \R$ are independent samples from $\N(\mu, \sigma^2)$, and the corrupted copies $\tilde x_1, \ldots, \tilde x_n \in \R$ satisfy $|\{i \in [n]: \tilde x_i \ne x_i\}| \le \eps n$ for some $\eps < 1/2 - \sqrt{\ln(2/\delta)/(2n)}$. Let $m$ be the median of $\tilde x_1, \ldots, \tilde x_n$. Then, with probability $1 - \delta$,
    \[
        |m - \mu| \le \sigma\gamma,
    \]
    where $\gamma$ is the $(1/2 + \eps + \sqrt{\ln(2/\delta)/(2n)})$-quantile of $\N(0, 1)$. In particular, for sufficiently large $n$ and sufficiently small $\eps$, $|m - \mu| \le O(\sigma (\eps + \sqrt{\log(n)/n}))$ holds with probability $1 - 1/\poly(n)$.
\end{lemma}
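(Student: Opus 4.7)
The plan is to argue by a standard quantile-plus-concentration analysis that at most half of the corrupted samples can lie above $\mu + \sigma\gamma$ (and symmetrically below $\mu - \sigma\gamma$), so the median $m$ must lie in $[\mu - \sigma\gamma, \mu + \sigma\gamma]$. Set $q \coloneqq 1/2 + \eps + \sqrt{\ln(2/\delta)/(2n)}$, so that $\gamma$ is the $q$-quantile of $\N(0,1)$, meaning $\Pr[x_i > \mu + \sigma\gamma] = 1 - q = 1/2 - \eps - \sqrt{\ln(2/\delta)/(2n)}$ for a clean sample $x_i$. The hypothesis $\eps < 1/2 - \sqrt{\ln(2/\delta)/(2n)}$ is precisely what ensures $1 - q > 0$, so $\gamma$ is well-defined and finite.

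First I would handle the upper tail. The random variable $N_+ \coloneqq |\{i : x_i > \mu + \sigma\gamma\}|$ is a sum of $n$ i.i.d.\ Bernoulli$(1-q)$ indicators, so Hoeffding's inequality gives
\[
    \Pr\!\left[N_+ \ge (1-q)n + \sqrt{n \ln(2/\delta)/2}\right] \le \delta/2.
\]
Substituting $1-q$ and simplifying, the threshold becomes $(1/2 - \eps)n$, so with probability $1 - \delta/2$ at most $(1/2 - \eps)n$ clean samples exceed $\mu + \sigma\gamma$. Since the adversary corrupts at most $\eps n$ entries, in the worst case every corruption pushes a sample above $\mu + \sigma\gamma$, yielding at most $(1/2 - \eps)n + \eps n = n/2$ of the $\tilde x_i$ exceeding $\mu + \sigma\gamma$. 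Hence the median satisfies $m \le \mu + \sigma\gamma$. The symmetric argument applied to $N_- \coloneqq |\{i : x_i < \mu - \sigma\gamma\}|$ (which has the same distribution by the symmetry of $\N(\mu,\sigma^2)$) gives $m \ge \mu - \sigma\gamma$ with probability $1 - \delta/2$. A union bound over the two events yields $|m - \mu| \le \sigma\gamma$ with probability $1 - \delta$.

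For the ``in particular'' clause, I would instantiate $\delta = 1/\poly(n)$, so that $\sqrt{\ln(2/\delta)/(2n)} = O(\sqrt{\log n / n})$, and then use a standard local estimate for the inverse Gaussian CDF near $1/2$: since $\Phi'(0) = 1/\sqrt{2\pi}$ and $\Phi$ is smooth, $\Phi^{-1}(1/2 + \eta) = O(\eta)$ uniformly for $|\eta|$ bounded away from $1/2$. Applying this with $\eta = \eps + \sqrt{\ln(2/\delta)/(2n)}$, which is bounded by a small constant under the stated assumptions, gives $\gamma = O\bigl(\eps + \sqrt{\log n / n}\bigr)$ and hence the claimed bound $|m - \mu| \le O\!\bigl(\sigma(\eps + \sqrt{\log n / n})\bigr)$.

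There is no real obstacle here; the only subtle point is bookkeeping the Hoeffding deviation so that it exactly cancels the slack in the definition of $q$, and making sure the quantile-to-linear conversion in the last step is justified under the assumed range of $\eps$ and $n$.
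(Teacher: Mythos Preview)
Your proposal is correct and follows essentially the same approach as the paper: both argue that if the median exceeded $\mu+\sigma\gamma$ then at least $(1/2-\eps)n$ \emph{clean} samples would have to exceed $\mu+\sigma\gamma$, apply Hoeffding (the paper says ``Chernoff'') with exactly the deviation $\sqrt{n\ln(2/\delta)/2}$ so that the threshold simplifies to $(1/2-\eps)n$, handle the lower tail by symmetry, and finish the ``in particular'' clause via the linear behavior of $\Phi^{-1}$ near $1/2$. The bookkeeping and the order of the steps are virtually identical.
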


\begin{proof}
    We upper bound the probability of $m > \mu + \sigma\gamma$ by $\delta/2$; the probability of $m < \mu - \sigma\gamma$ will also be at most $\delta / 2$ by symmetry. Note that $m > \mu + \sigma\gamma$ holds only if $\ge n/2$ values among $\tilde x_1, \ldots, \tilde x_n$ are above $\mu + \sigma\gamma$. Since at most $\eps n$ values are corrupted, this further implies that $\ge (1/2 - \eps)n$ numbers among $x_1, \ldots, x_n$ are larger than $\mu + \sigma\gamma$.

    For each $i \in [n]$, $\pr{}{x_i > \mu + \sigma\gamma} = \pr{X \sim \N(0, 1)}{X > \gamma} = \frac{1}{2} - \eps - \sqrt{\ln(2/\delta)/(2n)}$. Thus, by a Chernoff bound,
    \[
        \pr{}{m > \mu + \sigma\gamma}
    \le \pr{}{\frac{1}{n}\sum_{i=1}^{n}\1{x_i > \mu + \sigma\gamma} \ge \frac{1}{2} - \eps}
    \le \frac{\delta}{2}.
    \]

    Finally, the ``in particular'' part follows from the observation that $\eps < 1/2 - \sqrt{\ln(2/\delta)/(2n)}$ holds for all $\eps \le 1/4$, $\delta = 1/\poly(n)$ and sufficiently large $n$, together with the fact that the $(1/2 + \alpha)$-quantile of $\N(0, 1)$ is $O(\alpha)$ for all $\alpha$ bounded away from $1/2$.
\end{proof}

Another building block in our algorithm is from the literature on robust linear regression~\cite{KKM18,DKS19,DKKLSS19,PSBR20}. In particular, we will invoke the algorithm of~\cite{DKS19} to identify the linear dependence among several coordinates w.r.t.\ the hidden subspace $U$ (or verify that they are independent). The following lemma is an immediate consequence of the main result of~\cite{DKS19}. 
\begin{lemma}[Robust subspace recovery; implicit in~\cite{DKS19}]\label{lemma:cov-est}
    Let $\Sigma$ be a $d \times d$ positive semidefinite matrix such that the restriction of $\Sigma$ to the first $d-1$ coordinates is full-rank. Suppose that $x^{(1)}, \ldots, x^{(n)} \in \R^d$ are independent samples from $\N(0, \Sigma)$, and the corrupted copies $\tilde x^{(1)}, \ldots, \tilde x^{(n)} \in \R^d$ satisfy $|\{i \in [n]: \tilde x^{(i)} \ne x^{(i)}\}| \le \eps n$ for sufficiently small $\eps > 0$. There is an efficient algorithm that, when $n = \tilde\Omega(d^2/\eps^2)$,  correctly outputs $\rank(\Sigma)$ with high probability. Furthermore, when $\rank(\Sigma) = d - 1$, the algorithm outputs $c \in \R^{d-1}$ such that $\Sigma\begin{bmatrix}c\\-1 \end{bmatrix} = 0$.
\end{lemma}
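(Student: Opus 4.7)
The plan is to reduce the rank-determination task to an instance of robust linear regression and invoke the machinery of~\cite{DKS19}. Writing $W \coloneqq [d-1]$ for the first $d-1$ coordinates, the assumption that $\Sigma|_W$ is full-rank implies that for $x \sim \N(0,\Sigma)$, the coordinate $x_d$ can be written as $x_d = c^{\top} x|_W + \xi$, where $c \in \R^{d-1}$ is deterministic and $\xi \sim \N(0,\sigma^2)$ is independent of $x|_W$. By construction, $\rank(\Sigma) = d-1$ if and only if $\sigma^2 = 0$, and in that case $c$ is exactly the vector required by the lemma, since $\Sigma\begin{bmatrix} c\\ -1\end{bmatrix} = 0$. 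Thus, identifying $\rank(\Sigma)$ and producing $c$ reduces to robustly recovering the linear predictor $c$ and the conditional variance $\sigma^2$ from the pairs $\{(\tilde x^{(i)}|_W, \tilde x^{(i)}_d)\}_{i=1}^n$, an $\eps$-fraction of which may differ arbitrarily from the clean $(x^{(i)}|_W, x^{(i)}_d)$.

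Given $n = \tilde\Omega(d^2/\eps^2)$ corrupted samples, I would first apply a standard robust covariance estimator to the $W$-marginals to obtain $\hat\Sigma_W$ with $\|\hat\Sigma_W - \Sigma|_W\|_{\mathrm{op}}$ small, whiten the covariates, and then invoke the robust Gaussian linear regression algorithm of~\cite{DKS19} on the whitened pairs. The output is an estimate $\hat c$ whose error $\|\hat c - c\|$ in the $\Sigma|_W$-norm is $\tilde O(\sigma\sqrt{\eps})$ with high probability; in particular, in the rank-$(d-1)$ case, where $\sigma = 0$, we get $\hat c = c$ up to vanishing sampling error. The algorithm also returns a robust estimate $\hat\sigma^2$ of the residual variance, obtained for instance by taking a trimmed mean (or the squared median absolute deviation) of $\{(\tilde x^{(i)}_d - \hat c^{\top}\tilde x^{(i)}|_W)^2\}_{i=1}^n$, which is insensitive to the $\eps$-fraction of outlying residuals.

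The rank is declared by a thresholded test on $\hat\sigma^2$: output $\rank(\Sigma) = d-1$ together with $c = \hat c$ if $\hat\sigma^2$ is below an appropriate threshold (on the order of the sampling noise floor), and $\rank(\Sigma) = d$ otherwise. In the rank-$(d-1)$ case, at least $(1-\eps)n$ of the residuals are exactly zero, so any reasonable robust scale estimator returns a negligibly small value. In the rank-$d$ case, a $(1-\eps)$-fraction of the residuals are genuine $\N(0,\sigma^2)$ samples, so the robust scale estimator concentrates around $\sigma$, which is bounded away from zero by assumption on the problem.

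The main obstacle is invoking~\cite{DKS19} cleanly in the setting where both covariates and labels may be corrupted and the covariate distribution is an arbitrary (possibly poorly conditioned) non-isotropic Gaussian, rather than the isotropic labels-only model most prominently stated there. This is addressed by the whitening step above, together with the observation that a jointly corrupted pair is no worse than a labels-only corrupted pair for the purposes of the \cite{DKS19} filtering analysis; the sample complexity $\tilde\Omega(d^2/\eps^2)$ is inherited directly from their main theorem, and numerical stability of the whitening is controlled by the condition number of $\Sigma|_W$, which is consistent with the remark on numerical stability following Theorem~\ref{thm:subspace-recovery}.
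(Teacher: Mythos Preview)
Your reduction to robust linear regression via~\cite{DKS19} is exactly the paper's approach, and the identification of $c$ as the conditional-expectation weight vector with $\rank(\Sigma)=d-1 \iff \sigma=0$ matches the paper's argument. The one substantive difference is in how the rank is decided after running the regression.

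The paper's test is simpler and avoids a gap in yours: rather than thresholding a robust estimate $\hat\sigma^2$, it checks whether $\tilde x^{(i)}_d = \hat c^{\top}\tilde x^{(i)}|_W$ holds \emph{exactly} for a majority of $i\in[n]$. In the rank-$(d-1)$ case, the error bound $\tilde O(\sigma\sqrt{\eps})$ with $\sigma=0$ gives $\hat c = c$ exactly, so every clean sample passes; in the full-rank case, clean residuals are non-degenerate Gaussians and the exact equality fails almost surely regardless of $\hat c$. Your thresholded test, by contrast, requires $\sigma$ to be ``bounded away from zero by assumption on the problem'' in the rank-$d$ case, but the lemma makes no such assumption---$\Sigma$ could be full-rank with an arbitrarily small last eigenvalue, and then no data-independent threshold separates the two cases. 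Replacing your scale-estimator test with the paper's exact-majority check fixes this without changing the rest of your argument. (Your whitening step is fine but not strictly needed for the paper's version of the argument; the exact-equality test is scale-free.)
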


\begin{proof}
    Let $\Sigma'$ be the $(d-1)\times(d-1)$ restriction of $\Sigma$ to the first $d - 1$ coordinates. Suppose that $\rank(\Sigma) = d - 1$ and $u$ is a non-zero vector in the kernel of $\Sigma$. Since $\Sigma'$ is full-rank by assumption, $u_d$ must be non-zero, so we may take $u = \begin{bmatrix}c\\-1\end{bmatrix}$ for some $c \in \R^{d-1}$ without loss of generality. Then, any clean data $x^{(i)}$ drawn from $\N(0, \Sigma)$ must satisfy $u^{\top}x^{(i)} = 0$. Equivalently, $x^{(i)}_d$ is exactly a linear function of the first $d-1$ coordinates given by weight vector $c$. Then, the main result of~\cite{DKS19} gives an algorithm that, as long as $n = \tilde\Omega(d^2/\eps^2)$, exactly recovers $c$ from the corrupted data efficiently. Furthermore, when $c$ is correctly recovered, we can verify that $\tilde x^{(i)}_d = c^{\top}\tilde x^{(i)}|_{[d-1]}$ holds for the majority of $i \in [d]$.

    On the other hand, when $\Sigma$ is full-rank, no matter what weight vector $c$ is returned by the robust linear regression algorithm, the equation $\tilde x^{(i)}_d = c^{\top}\tilde x^{(i)}|_{[d-1]}$ will not hold for the majority of the observed data. Thus, we can correctly decide whether $\Sigma$ is full-rank and also return the vector $c$ in the rank-deficient case.
\end{proof}

\subsection{Subspace Recovery}
Our subspace recovery algorithm is stated in Algorithm~\ref{algo:subspace-recovery}. Let $U$ be the column space of the unknown covariance $\Sigma$, and let $U' = \Span(U \cup \{\mu\})$. Algorithm~\ref{algo:subspace-recovery} maintains a set $J$ of coordinates that are independent w.r.t.\ $U$, and greedily add new indices into it. We check whether adding an index $i \in [d]$ leads to a linear dependence using the algorithm from Lemma~\ref{lemma:cov-est}. If so, Lemma~\ref{lemma:cov-est} enables us to identify a vector $c'$ that is orthogonal to the subspace $U$, in which case we add $c'$ to set $V$; otherwise, we include the independent coordinate $i$ in $J$. Ideally, we end up with $|J| = k$ and $|V| = d - k$, in which case we can identify the subspace $U$ using $[\Span(V)]^{\bot}$. At this point, we would be done if the mean vector $\mu$ is also in $U$; the case that $\mu \notin U$ is handled via a simple post-processing based on $V$.

\begin{algorithm2e}[h]
    \caption{Subspace Recovery under Coordinate-Level Corruption}
    \label{algo:subspace-recovery}
    \KwIn{Corrupted data $\tilde x^{(1)}, \tilde x^{(2)}, \ldots, \tilde x^{(n)}$ in the setup of Problem~\ref{prob:mean-est}.}
    \KwOut{An estimate $\hat U'$ of the subspace $U' = \Span(U \cup \{\mu\})$ in which the clean data lie.}
    $J \gets \emptyset$; $V \gets \emptyset$\;
    \For{$i \in [d]$} {
        \lFor{$j \in [n/2]$}{$y^{(j)} \gets (\tilde x^{(2j-1)} - \tilde x^{(2j)})|_{J \cup \{i\}}$}
        Run the algorithm from Lemma~\ref{lemma:cov-est} on $y^{(1)}, \ldots, y^{(n/2)}$\label{line:DKS}\;
        \uIf{algorithm reports full rank} {
            $J \gets J \cup \{i\}$\;
        } \Else {
            Let $c \in \R^{|J|}$ be the vector returned by the algorithm\;
            Set $c' \in \R^d$ such that $c'|_J = c$, $c'_i = -1$ and $c'|_{[d]\setminus(J\cup\{i\})} = 0$\;
            $V \gets V \cup \{c'\}$\;
        }
    }
    $\hat U \gets$ the orthogonal complement of $\Span(V)$\;
    \lFor{$v \in V$}{$y_v \gets$ the majority of $\{v^{\top}\tilde x^{(i)}: i \in [n]\}$}
    Let $\alpha \in \R^d$ be a solution to the linear system $\{v^{\top}\alpha = y_v: v \in V\}$\label{line:solve-linear}\;
    $\hat U' \gets \Span(\hat U \cup \{\alpha\})$\;
    \Return $\hat U'$\;
\end{algorithm2e}

The correctness of Algorithm~\ref{algo:subspace-recovery} follows from Lemma~\ref{lemma:cov-est} and some basic linear algebra. The proof proceeds by first proving that invoking the algorithm from Lemma~\ref{lemma:cov-est} allows us to correctly identify linearly dependent coordinates. Then, we argue that after the first for-loop, the sets $J$ and $V$ are constructed as expected, and thus $\hat U$ correctly recovers the subspace $U$ in which the Gaussian noises lie. Finally, we show that whenever $\mu \notin U$, the second part of the algorithm correctly augments the subspace to contain $\mu$.

Formally, we start by defining a ``good event'' $\Egood$ that implies the correctness of the algorithm. We will show that, over the randomness in both the clean data $x^{(1)}, \ldots, x^{(n)}$ and the corruption locations $S^{(1)}, \ldots, S^{(n)}$, $\Egood$ happens with high probability.

Write the covariance matrix $\Sigma$ as $A^{\top}A$ for some $k \times d$ matrix $A$. Let $a_i \in \R^k$ be the $i$-th column of $A$. We define sets $J_0, J_1, \ldots, J_d \subseteq [d]$ as:
\begin{itemize}
    \item $J_0 = \emptyset$.
    \item For every $i \in [d]$, if $a_i \in \Span(\{a_j: j \in S_{i-1}\})$, $J_i = J_{i-1}$; otherwise, $J_i = J_{i-1} \cup \{i\}$.
\end{itemize}
In other words, each $J_i$ corresponds to a basis among $\{a_1, a_2, \ldots, a_i\}$ obtained by adding vectors greedily. Clearly, the size of each $J_i$ is at most $k$. In our analysis below, $J_i$ will be the intended value of the set $J$ after the first $i$ iterations of the for-loop in Algorithm~\ref{algo:subspace-recovery}.

\begin{definition}
    $\Egood$ is defined as the event that the following happen simultaneously in an execution of Algorithm~\ref{algo:subspace-recovery}:
    \begin{itemize}
        \item For every $i \in [d]$,
        \[
            \sum_{j=1}^{n}\1{S^{(j)} \cap (J_{i-1} \cup \{i\}) \ne \emptyset} < \frac{n}{2}.
        \]
        In other words, the majority of the data are clean when restricted to the coordinates indexed by $J_{i-1} \cup \{i\}$.
        \item For every $i \in [d]$, the algorithm from Lemma~\ref{lemma:cov-est} gives the correct answer if $J = J_{i-1}$ holds on Line~\ref{line:DKS} of the $i$-th iteration of the for-loop.
    \end{itemize}
\end{definition}

We first show that $\Egood$ indeed happens with high probability, over the randomness in the clean data and the locations of the corruptions.

\begin{lemma}\label{lemma:good-event}
    $\pr{}{\Egood} \ge 1 - 1/d$.
\end{lemma}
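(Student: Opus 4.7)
The plan is to establish each part of $\Egood$ separately with probability at least $1 - 1/(2d)$ and then union-bound over $i \in [d]$ within each part.

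For the first part, the plan is a Chernoff plus union-bound argument. Fix any $i \in [d]$. Since $|J_{i-1} \cup \{i\}| \le k + 1$, for each $j \in [n]$ the probability that a uniform $S^{(j)} \in \binom{[d]}{s}$ intersects $J_{i-1} \cup \{i\}$ is at most $(k+1)s/d \le 2c_0$ (using $ks \le c_0 d$ and $k \ge 1$). Since $S^{(1)}, \ldots, S^{(n)}$ are mutually independent, a standard multiplicative Chernoff bound shows that the number of $j$ with $S^{(j)} \cap (J_{i-1} \cup \{i\}) \ne \emptyset$ is at most $\eps^\star n$ for some small constant $\eps^\star$ (depending on $c_0$, and in particular well below $1/2$), except with probability $e^{-\Omega(n)}$. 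Taking $n = \tilde\Omega(k^2) \ge \Omega(\log d)$ and union-bounding over $i \in [d]$ takes care of the first part.

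For the second part, the plan is to condition on the first part and verify the preconditions of Lemma~\ref{lemma:cov-est} for each iteration. Fix $i \in [d]$ and suppose that $J = J_{i-1}$ at the start of the $i$-th iteration; this is the only hypothesis under which the second part of $\Egood$ makes a claim. The clean version $(x^{(2j-1)} - x^{(2j)})|_{J \cup \{i\}}$ is distributed as $\N(0, 2\Sigma'_i)$, where $\Sigma'_i$ is the restriction of $\Sigma$ to $J_{i-1} \cup \{i\}$; by construction of $J_{i-1}$ (a greedy basis of $\{a_1, \ldots, a_{i-1}\}$), the principal submatrix of $\Sigma'_i$ on $J_{i-1}$ is full-rank, while $\Sigma'_i$ itself is rank $|J_{i-1}|$ or $|J_{i-1}| + 1$ depending on whether $a_i$ lies in $\Span(\{a_j : j \in J_{i-1}\})$. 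The pairs $(\tilde x^{(2j-1)}, \tilde x^{(2j)})$ are mutually independent across $j$, so the $y^{(j)}$'s are independent, and a pair is corrupted only if at least one of the two samples has $S^{(2j-1)}$ or $S^{(2j)}$ meeting $J_{i-1} \cup \{i\}$. Thus the fraction of corrupted $y^{(j)}$'s is at most twice the fraction from the first part of $\Egood$, which is at most $2\eps^\star$ by choice of $c_0$. Since the ambient dimension for this regression instance is $|J_{i-1} \cup \{i\}| \le k + 1$, the sample-complexity hypothesis of Lemma~\ref{lemma:cov-est} becomes $n/2 = \tilde\Omega(k^2/\eps^{\star 2})$, which is satisfied by $n = \tilde\Omega(k^2)$ with a suitable polylogarithmic factor. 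Lemma~\ref{lemma:cov-est} then guarantees the correct rank output (and the correct coefficient vector $c$ in the rank-deficient case) with probability $1 - 1/\poly(d)$. A union bound over $i \in [d]$ gives probability at least $1 - 1/(2d)$.

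The only delicate point is that the ``corruption rate'' input to Lemma~\ref{lemma:cov-est} is itself a random quantity determined by $S^{(1)}, \ldots, S^{(n)}$; I would handle this by observing that Lemma~\ref{lemma:cov-est} is a worst-case statement in the corruption pattern, so it applies deterministically once we condition on the first part of $\Egood$ holding. Combining the two parts via a final union bound over $i \in [d]$ (already absorbed above) yields $\pr{}{\Egood} \ge 1 - 1/d$, as required.
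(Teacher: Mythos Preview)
Your proposal is correct and follows essentially the same approach as the paper's own proof: Chernoff plus union bound for the first part (using $|J_{i-1}\cup\{i\}|\le k+1$ and $ks\le c_0 d$), and then verifying the hypotheses of Lemma~\ref{lemma:cov-est} on the $(k+1)$-dimensional restriction for the second part. The only cosmetic difference is that you reuse the strong first-part bound (fraction $\le \eps^\star$) to control the corruption rate for the $y^{(j)}$'s, whereas the paper runs a second, independent Chernoff bound there; both are valid and lead to the same $1-1/d$ conclusion.
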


\begin{proof}
    We first note that $J_0, J_1, \ldots, J_d$ are determined by the covariance $\Sigma$ alone, and thus independent of $x^{(1)}, \ldots, x^{(n)}$ and $S^{(1)}, \ldots, s^{(n)}$.

    We start by upper bounding the probability for the first condition to be violated at each $i \in [d]$. Note that for each $j \in [n]$, $\1{S^{(j)} \cup (J_{i-1}\cup\{i\}) \ne \emptyset}$ is an independent Bernoulli random variable with expectation at most $|J_{i-1}\cup\{i\}|\cdot (s/d) \le (k+1)s/d \le 2ks/d \le 2c_0$. Then, for sufficiently small $c_0$ and by a Chernoff bound, the probability for the inequality to be violated is at most $e^{-\Omega(n)}$. We can make this probability smaller than $1/(2d^2)$ by setting $n = \Omega(\log d)$. (This can be done while still satisfying $n = \tilde O(k^2)$, since the $\tilde O(\cdot)$ notation is allowed to hide $\polylog(d)$ factors.) By a union bound over $i \in [d]$, the total probability for the first condition to be violated is at most $1/(2d)$.

    For each $i \in [d]$, suppose that we run the algorithm from Lemma~\ref{lemma:cov-est} when $J = J_{i-1}$. Note that for each $j \in [n/2]$, $x^{(2j-1)} - x^{(2j)}$ follow the distribution $\N(0, 2\Sigma)$. Let $\Sigma'$ be the principal submatrix of $\Sigma$ indexed by $J_{i-1}\cup\{i\}$. Then, $y^{(j)} = (\tilde x^{(2j-1)} - \tilde x^{(2j)})|_{J_{i-1}\cup\{i\}}$ can be viewed as a corrupted copy of a sample from $\N(0, 2\Sigma')$, and the probability of corruption is at most $2\cdot|J_{i-1} \cup\{i\}|\cdot(s / d) \le 2(k+1)s/d \le 4ks/d \le 4c_0$. Therefore, for sufficiently small $c_0$ and by a Chernoff bound, at least a $2/3$ fraction of $y^{(1)}, \ldots, y^{(n/2)}$ are clean samples from $\N(0, 2\Sigma')$, except with an $e^{-\Omega(n)}$ probability. In this case, the algorithm from Lemma~\ref{lemma:cov-est}, with parameters $d$ and $\eps$ set to $k + 1$ and $1/3$, should give the correct answer with probability at least $1 - 1 / (4d^2)$. The probability for $\Egood$ not to happen because of the second condition is then at most
    \[
        d \cdot \left(e^{-\Omega(n)} + \frac{1}{4d^2}\right)
    \le d \cdot \left(\frac{1}{4d^2} + \frac{1}{4d^2}\right)
    =   \frac{1}{2d},
    \]
    where the first step holds for some appropriate $n = \Omega(\log d)$.

    Therefore, we conclude that $\pr{}{\Egood} \ge 1 - 1/(2d) - 1/(2d) = 1 - 1/d$.
\end{proof}

The following lemma states that conditioning on event $\Egood$, the set $J$ computed through the course of Algorithm~\ref{algo:subspace-recovery} matches the sets $J_0, J_1, \ldots, J_d$. Furthermore, the algorithm from Lemma~\ref{lemma:cov-est} allows us to identify the linear structure in $\Sigma$.

\begin{lemma}\label{lemma:correctness-of-J}
    Write the covariance matrix $\Sigma$ as $A^{\top}A$ for some $k \times d$ matrix $A$. Let $a_i \in \R^k$ be the $i$-th column of $A$. Conditioning on the event $\Egood$, the following holds for every $i \in [d]$:
    \begin{itemize}
        \item If $a_i \in \Span(\{a_j: j \in J_{i-1}\})$, the algorithm from Lemma~\ref{lemma:cov-est} outputs a vector $c \in \R^{|J_{i-1}|}$ such that $a_i = A|_{J_{i-1}} \cdot c$.
        \item After the $i$-th iteration of the for-loop, $J$ is equal to $J_i$.
    \end{itemize}
\end{lemma}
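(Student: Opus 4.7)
The plan is to prove both bullets simultaneously by induction on $i \in \{0, 1, \ldots, d\}$, using the stronger inductive hypothesis that after the $i$-th iteration, $J = J_i$. The base case $i = 0$ is immediate since Algorithm~\ref{algo:subspace-recovery} initializes $J = \emptyset = J_0$.

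For the inductive step, assume $J = J_{i-1}$ entering the $i$-th iteration. I first verify that the call on Line~\ref{line:DKS} satisfies the preconditions of Lemma~\ref{lemma:cov-est}. Write $A' := A|_{J_{i-1} \cup \{i\}}$ and $\Sigma' := (A')^\top A'$; then each difference $x^{(2j-1)} - x^{(2j)}$ restricted to $J_{i-1} \cup \{i\}$ is an i.i.d.\ sample from $\N(0, 2\Sigma')$, and by the first condition of $\Egood$, the corrupted copies $y^{(j)}$ are clean on a $(1 - \eps)$-fraction of indices for some small $\eps$. Moreover, by the greedy construction of $J_{i-1}$, the vectors $\{a_j : j \in J_{i-1}\}$ are linearly independent, so after placing the index $i$ last in the ordering, the restriction of $\Sigma'$ to the first $|J_{i-1}|$ coordinates equals $(A|_{J_{i-1}})^\top (A|_{J_{i-1}})$, which is full rank. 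By the second condition of $\Egood$, the algorithm from Lemma~\ref{lemma:cov-est} returns the correct answer.

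I then split into two cases. If $a_i \in \Span(\{a_j : j \in J_{i-1}\})$, then $\rank(\Sigma') = |J_{i-1}|$, so the algorithm reports rank deficiency and returns $c \in \R^{|J_{i-1}|}$ with $\Sigma' \bigl[\begin{smallmatrix} c \\ -1 \end{smallmatrix}\bigr] = 0$. Using the standard identity $\ker(A'^\top A') = \ker(A')$ (which follows from $u^\top A'^\top A' u = \|A' u\|_2^2$), this is equivalent to $A' \bigl[\begin{smallmatrix} c \\ -1 \end{smallmatrix}\bigr] = 0$, i.e., $a_i = A|_{J_{i-1}} \cdot c$, establishing the first bullet. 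The algorithm then leaves $J$ unchanged, so $J = J_{i-1} = J_i$ as required. If instead $a_i \notin \Span(\{a_j : j \in J_{i-1}\})$, then $\Sigma'$ is full rank, the algorithm reports full rank, and the update $J \gets J \cup \{i\}$ yields $J = J_{i-1} \cup \{i\} = J_i$; the first bullet's hypothesis is vacuous.

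No step here poses a serious difficulty: the argument is essentially a bookkeeping exercise combining the inductive definition of the $J_i$'s, the correctness guarantee of Lemma~\ref{lemma:cov-est} supplied by $\Egood$, and the algebraic identity $\ker(A'^\top A') = \ker(A')$ that translates a kernel vector of $\Sigma'$ into the desired linear dependence among the columns of $A$. The only point that requires a bit of care is confirming that the preconditions of Lemma~\ref{lemma:cov-est}—in particular, the full-rank condition on the "first $d-1$ coordinates"—hold at each invocation, which follows automatically from the linear independence of $\{a_j : j \in J_{i-1}\}$ that is built into the definition of $J_{i-1}$.
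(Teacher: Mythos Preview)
Your proposal is correct and follows essentially the same inductive argument as the paper: induct on $i$ with hypothesis $J = J_{i-1}$, invoke the second clause of $\Egood$ to guarantee the call to Lemma~\ref{lemma:cov-est} succeeds, and use $\ker(A'^\top A') = \ker(A')$ to translate the returned kernel vector into the relation $a_i = A|_{J_{i-1}} c$. The only difference is that you additionally spell out why the preconditions of Lemma~\ref{lemma:cov-est} hold, which the paper leaves implicit (since $\Egood$ already packages the correctness guarantee), but this extra care is harmless.
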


\begin{proof}
    We prove the lemma by induction. At $i = 0$, we clearly have $J = \emptyset = J_0$. Suppose that $i \ge 1$ and, after the first $i - 1$ iterations, $J$ is equal to $J_{i-1}$. The definition of $\Egood$ guarantees that the algorithm from Lemma~\ref{lemma:cov-est} returns the correct answer. In particular, if $a_i \in \Span(\{a_j: j \in J_{i-1}\})$, the algorithm outputs $c \in \R^{|J_{i-1}|}$ such that $\begin{bmatrix}c\\-1\end{bmatrix}$ is in the kernel of $\Sigma' = (A|_{J_{i-1} \cup \{i\}})^{\top}(A|_{J_{i-1} \cup \{i\}})$. This implies that $(A|_{J_{i-1} \cup \{i\}})\begin{bmatrix}c\\-1\end{bmatrix} = 0$, i.e., $a_i = A|_{J_{i-1}} \cdot c$. Furthermore, the algorithm reports full rank if and only if $a_i \notin \Span(\{a_j: j \in J_{i-1}\})$, which means that the algorithm correctly updates $J$ to $J_i$. This completes the inductive step.
\end{proof}

Recall that $U$ is the column space of $\Sigma$. The following lemma states the desired properties of the set $V$ after the for-loop.

\begin{lemma}\label{lemma:correctness-of-V}
    In Algorithm~\ref{algo:subspace-recovery}, conditioning on the event $\Egood$, the following holds after the for-loop over $i \in [d]$: $V$ contains $d - k$ linearly independent vectors, each of which is orthogonal to $U$ and supported over $J_{i-1} \cup \{i\}$ for some $i \in [d]$.
\end{lemma}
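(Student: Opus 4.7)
The plan is to combine Lemma~\ref{lemma:correctness-of-J} with an elementary triangular-structure argument. Conditioning on $\Egood$, Lemma~\ref{lemma:correctness-of-J} guarantees that $J$ evolves exactly as $J_0, J_1, \ldots, J_d$ through the for-loop. Writing $\Sigma = A^\top A$ for a $k \times d$ matrix $A$, we have $\rank(A) = \rank(\Sigma) = k$, so the greedy basis satisfies $|J_d| = k$. Hence the for-loop adds a vector to $V$ at exactly $d - k$ iterations; denote these by $i_1 < i_2 < \cdots < i_{d-k}$, and let $v^{(1)}, \ldots, v^{(d-k)}$ be the corresponding vectors.

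Next I would verify the two ``per-vector'' properties. For iteration $i \in \{i_1, \ldots, i_{d-k}\}$ the algorithm from Lemma~\ref{lemma:cov-est} is invoked with $J = J_{i-1}$ and, by Lemma~\ref{lemma:correctness-of-J}, returns $c \in \R^{|J_{i-1}|}$ with $a_i = A|_{J_{i-1}}\, c$. The vector $c'$ added to $V$ satisfies $c'|_{J_{i-1}} = c$, $c'_i = -1$, and $c'|_{[d]\setminus(J_{i-1}\cup\{i\})} = 0$, so its support lies in $J_{i-1} \cup \{i\}$, as claimed. For orthogonality to $U$, note that $U$ is the column space of $\Sigma = A^\top A$, which coincides with the row space of $A$; thus $c' \perp U$ is equivalent to $A c' = 0$, and indeed
\[
    A c' = \sum_{j \in J_{i-1}} c_j\, a_j - a_i = A|_{J_{i-1}}\, c - a_i = 0.
\]

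Finally, for linear independence I would exploit the triangular structure forced by the support chain $J_{i-1} \subseteq [i-1]$. Each $v^{(\ell)}$ has support contained in $J_{i_\ell - 1} \cup \{i_\ell\} \subseteq [i_\ell]$, so $v^{(\ell)}_{j} = 0$ for every $j > i_\ell$, while $v^{(\ell)}_{i_\ell} = -1$. Restricting the $(d-k) \times d$ matrix whose rows are $v^{(1)}, \ldots, v^{(d-k)}$ to the columns indexed by $i_1 < \cdots < i_{d-k}$ therefore yields a lower-triangular $(d-k) \times (d-k)$ matrix with $-1$'s on the diagonal. This submatrix is nonsingular, so the $v^{(\ell)}$'s are linearly independent. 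There is no real obstacle here; the only thing worth flagging is that the chain $J_{i-1} \subseteq [i-1]$ is what produces the triangular structure, and hence the count $|V| = d - k$ together with independence.
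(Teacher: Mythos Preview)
Your proof is correct and follows essentially the same approach as the paper's: both use Lemma~\ref{lemma:correctness-of-J} to count $|V| = d-k$ and certify the support and $Ac'=0$ conditions, and both establish linear independence via the triangular structure coming from $c'_i = -1$ and $c'_j = 0$ for $j > i$. The only cosmetic difference is that the paper arranges the restricted vectors as columns (obtaining an upper-triangular matrix), whereas you arrange them as rows (obtaining the transpose, a lower-triangular matrix).
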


\begin{proof}
    By Lemma~\ref{lemma:correctness-of-J}, we have $|J| = \rank(A) = k$ after the for-loop. Since each iteration increases either $|J|$ or $|V|$ by $1$, $|V|$ is equal to $d - |J| = d - k$.

    Consider $i \in [d] \setminus J$ and the vector $c'$ added to $V$ in the $i$-th iteration. By Lemma~\ref{lemma:correctness-of-J} and our choice of $c'$ in Algorithm~\ref{algo:subspace-recovery}, the support of $c'$ is contained in $J_{i-1} \cup \{i\}$. By Lemma~\ref{lemma:correctness-of-J}, the vector $c'$ satisfies $Ac' = 0$ and thus, $\Sigma c' = A^{\top}Ac' = 0$, i.e., $c'$ is orthogonal to $U$.

    Finally, to prove that the vectors in $V$ are linearly independent, we note that the vector $c'$ added in the $i$-th iteration satisfies $c'_i = -1 \ne 0$ and $c'_j = 0$ for all $j > i$. This implies that these vectors are linearly independent: Consider the $(d - k)\times(d - k)$ matrix the columns of which are the vectors in $V$ restricted to indices in $[d] \setminus J$. Such a matrix is upper triangular and has non-zero entries on the diagonal, and is thus full-rank.
\end{proof}

\begin{proof}[Proof of Theorem~\ref{thm:subspace-recovery}]
Algorithm~\ref{algo:subspace-recovery} clearly runs in polynomial time, since it runs the algorithm from Lemma~\ref{lemma:cov-est} exactly $d$ times, solves a linear systems with $O(d)$ variables and equations, and performs a polynomial number of other elementary operations.

We prove that the algorithm is correct conditioning on $\Egood$. This is sufficient since Lemma~\ref{lemma:good-event} guarantees that $\Egood$ happens with high probability. By Lemma~\ref{lemma:correctness-of-V}, $\Span(V)$ is $(d-k)$-dimensional and $U$ is contained in $\hat U 
= [\Span(V)]^{\bot}$. Then, since $\dim(\hat U) = d - \dim(\Span(V)) = k = \dim(U)$, $\hat U$ is exactly equal to $U$.

Recall that each clean data point $x^{(i)}$ is the sum of $\mu$ and some vector in subspace $U$. Thus, as each $v \in V$ is orthogonal to $U$, $v^{\top}x^{(i)}$ gives $v^{\top}\mu$. Note that $v^{\top}\tilde x^{(i)} \ne v^{\top}x^{(i)}$ only if $S^{(i)}$ and the support of $v$ intersect. By Lemma~\ref{lemma:correctness-of-V}, the support of $v$ is contained in $J_{j-1} \cup \{j\}$ for some $j \in [d]$.
Event $\Egood$ guarantees that $\sum_{i=1}^{n}\1{S^{(i)} \cap (J_{j-1} \cup \{j\}) \ne \emptyset} < n/2$. Therefore, the majority of $v^{\top}\tilde x^{(i)}$ over $i \in [n]$ gives the correct value of $v^{\top}\mu$. Finally, since $U$ is exactly the orthogonal complement of $V$, every solution $\alpha$ to the linear system $\{v^{\top}\alpha = v^{\top}\mu: v \in V\}$ can be written as $\mu + u$ for some $u \in U$. Therefore, the subspace $\hat U' = \Span(\hat U \cup \{\alpha\})$ that Algorithm~\ref{algo:subspace-recovery} outputs is the desired subspace $U' = \Span(U \cup \{\mu\})$.
\end{proof}

\begin{remark}[Numerical stability]\label{remark:numerical-stability-detailed}
    As stated, Algorithm~\ref{algo:subspace-recovery} might be numerically unstable in the following steps: (1) Solving the linear system on Line~\ref{line:solve-linear}; (2) Running the robust linear regression algorithm of~\cite{DKS19} on Line~\ref{line:DKS}.

    The first step does not pose a problem. This is because our algorithm guarantees that every vector $v \in V$ is supported over $J \cup \{i\}$ for some $i \in [d] \setminus J$. Furthermore, $c_i = -1$. Thus, to satisfy the equation $v^{\top}\alpha = y_v$, we simply set $\alpha_j = 0$ for every $j \in J$ and $\alpha_i = -y_v$. This allows us to solve the linear equations in a numerically stable way.

    On the other hand, the stability of the first step relies on an appropriate condition number bound on $\Sigma$. Concretely, we assume that every full-rank principal sub-matrix of $\Sigma$ has eigenvalues bounded between $1/\kappa$ and $\kappa$. Then, when we apply the algorithm of~\cite{DKS19} via Lemma~\ref{lemma:cov-est}, the covariance $\Sigma'$ of the marginal distribution in the robust linear regression instance would be a full-rank sub-matrix of $\Sigma$. Note that the algorithm of~\cite{DKS19} starts with (robustly) estimating $\Sigma'$ and applying an appropriate linear transform so that the covariance becomes $\Theta(I)$. These two steps should not cause numerical issues since $\Sigma'$ is well-conditioned by our assumption.

    Furthermore, it is easy to bound the entries in the weight vector $c$ in the linear regression instance by $\poly(d, \kappa)$. Thus, verifying whether $x_i = c^{\top}x|_J$ holds for the clean data can also be done in a numerically stable way. In the positive case, $x_i - c^{\top}x|_J$ would be a Gaussian distribution with variance $1/\poly(\kappa)$, as long as our estimate for $c$ is $1/\poly(d, \kappa)$-accurate. In the negative case, since the restriction of $\Sigma$ to indices in $J \cup \{i\}$, denoted by $\Sigma''$, is well-conditioned by assumption, $x_i - c^{\top}x|_J$ follows a Gaussian with variance
    \[
        \begin{bmatrix}c^{\top}&-1\end{bmatrix}\Sigma''\begin{bmatrix}c\\-1\end{bmatrix}
    \ge \frac{1}{\kappa}\cdot(1 + \|c\|_2^2)
    = \Omega(1/\kappa).
    \]
    Therefore, if the sample $x$ is given with $1/\poly(d, \kappa)$ accuracy, these two cases can still be distinguished.

    Conversely, it is easy to see that the bit complexity must have a $\log\kappa$ dependence---for instance, to distinguish a degenerate covariance $\begin{bmatrix}1&1\\1&1\end{bmatrix}$ from a close-to-degenerate one $\begin{bmatrix}1&1\\1&1 + \kappa\end{bmatrix}$, we need to distinguish whether $x_1 - x_2$ is constantly zero or follows a Gaussian $\N(0, \kappa)$, which is impossible if $x_1$ and $x_2$ are given with $\ll \log\kappa$ bits of precision.

    In light of the informal discussion above, we conjecture that our proof of Theorem~\ref{thm:subspace-recovery} can be made robust to noises in the samples and floating point operations, as long as $\Theta(\log\kappa)$ bits of precision can be guaranteed, which seems to be a minimal assumption. We omit the formal investigation of this issue to future work, as it is orthogonal to the main focus of this paper.
\end{remark}

\subsection{Put Everything Together}
Now we prove Theorem~\ref{thm:mean-estimation} using Corollary~\ref{cor:expected-error} and Theorem~\ref{thm:subspace-recovery}. Formally, our algorithm is defined in Algorithm~\ref{algo:mean-est}: We first run Algorithm~\ref{algo:subspace-recovery} to recover the subspace $U'$ that contains all the clean data points. Then, we use the simple median algorithm from Lemma~\ref{lemma:1-d-mean-est} to obtain a crude estimate of $m_j \approx \mu_j$, which is then used to clip the observed data. Finally, we use the Basis Pursuit method to denoise the data points individually.

\begin{algorithm2e}[h]
    \caption{Gaussian Data Recovery under Coordinate-Level Corruption}
    \label{algo:mean-est}
    \KwIn{Corrupted data $\tilde x^{(1)}, \tilde x^{(2)}, \ldots, \tilde x^{(n)}$ in the setup of Problem~\ref{prob:mean-est}. Parameter $B_0 = \Theta(B\sqrt{\log(nd)})$.}
    \KwOut{Estimates $\hat x^{(1)}, \hat x^{(2)}, \ldots, \hat x^{(n)}$.}
    Run Algorithm~\ref{algo:subspace-recovery} to obtain subspace $U'$\;
    \For{$j \in [d]$} {
        $m_j \gets$ median of $\tilde x^{(1)}_j, \tilde x^{(2)}_j, \ldots, \tilde x^{(n)}_j$\;
        Truncate $\tilde x^{(1)}_j, \tilde x^{(2)}_j, \ldots, \tilde x^{(n)}_j$ to $[m_j - B_0, m_j + B_0]$\;
    }
    \For{$i \in [n]$} {
        Let $\hat x^{(i)}$ be an optimum of the following BP program:
        \begin{align*}
            \text{minimize}\quad&\|\hat x - \tilde x^{(i)}\|_1\\
            \text{subject to}\quad& \hat x \in U'
        \end{align*}
    }
    \Return $\hat x^{(1)}, \ldots, \hat x^{(n)}$\;
\end{algorithm2e}

\begin{proof}[Proof of Theorem~\ref{thm:mean-estimation}]
    For every $j \in [d]$, the clean entries $x^{(1)}_j, x^{(2)}_j, \ldots, x^{(n)}_j$ independently follow the Gaussian distribution with mean $\mu_j$ and variance $\Sigma_{jj} \le B^2$. Furthermore, each entry gets corrupted independently with probability $s/d$. Thus, by Lemma~\ref{lemma:1-d-mean-est}, the median of the observed entries, $m_j$, satisfies $|m_j - \mu_j| = O(B\cdot(s/d + \sqrt{\log(n)/n}))$ with high probability. Furthermore, with high probability, every clean entry $x^{(i)}_j$ is $O(B\sqrt{\log(nd)})$-close to $\mu_j$, and thus $|x^{(i)}_j - m_j| \le O(B\sqrt{\log(nd)}) + |m_j - \mu_j| \le B_0$ (for appropriately chosen $B_0 = \Theta(B\sqrt{\log(nd)})$). Therefore, truncating the $j$-th coordinate of every data point to the range $[m_j - B_0, m_j + B_0]$ does not alter the clean data entries. This reduces the problem to the case that the corruptions are bounded in magnitude by $B_0 + |m_j - \mu_j| = O(B\sqrt{\log(nd)})$.

    By Theorem~\ref{thm:subspace-recovery}, running Algorithm~\ref{algo:subspace-recovery} on the corrupted data gives the subspace $U' = \Span(U \cup \{\mu\})$ with high probability. Then, by Corollary~\ref{cor:expected-error}, running the BP method with subspace $U'$ gives estimates $\hat x^{(1)}$ through $\hat x^{(n)}$ such that $\Ex{}{\|\hat x^{(i)} - x^{(i)}\|_1}$ is bounded by $\tilde O(Bks/d\cdot\max\{ks^2/d, 1\})$.
\end{proof}

\section{Discussion and Open Problems}

In this section, we highlight a concrete open problem, the solutions of which could further strengthen the recovery guarantees presented in this work. We also discuss several directions of further exploration on robust algorithms under coordinate-level corruptions.

\subsection{Optimal Error under Weaker Assumptions?}
Recall that in Theorem~\ref{thm:mean-estimation}, the nearly-optimal error bound of $\tilde O(Bks/d)$ only holds in the $ks^2 = O(d)$ regime. In comparison, the rest of the proof would go through under the weaker assumption of $ks = O(d)$. We discuss a concrete approach to closing this gap.

This extra factor of $s$ can be traced back to Lemma~\ref{lemma:size-upper-bound-better}, which we restate below (after replacing $k'$ and $t'$ with $k$ and $t$ for brevity). Recall the definition of a perfect $s$-matching from Definition~\ref{def:perfect-matching}.

\vspace{6pt}
\noindent\textbf{Lemma~\ref{lemma:size-upper-bound-better}} (rephrased).\ {\it
    If $\F \subseteq \binom{[d]}{s}$ contains no $k$ sets that can be perfectly $(s-t)$-matched,
    \[
        |\F| \le \binom{s(k-1)}{t+1}\cdot\binom{d-t-1}{s-t-1}.
    \]
}
\vspace{6pt}

We conjecture that maximum possible size of $\F$ is achieved by one of a few simple constructions.

\begin{conjecture}\label{conj:size-bound}
    If $\F \subseteq \binom{[d]}{s}$ contains no $k$ sets that can be perfectly $(s-t)$-matched,
    \[
        |\F| \le \max_{i \in [s - t]}|\F_i|,
    \]
    where $\F_i \coloneqq \left\{S \in \binom{[d]}{s}: |S \cap [ik - 1]| \ge t + i\right\}$.
\end{conjecture}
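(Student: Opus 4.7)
The plan is to adapt the shifting (compression) technique from extremal set theory. Conjecture~\ref{conj:size-bound} specializes to the Erd\H{o}s Matching Conjecture when $t = 0$: a perfect $s$-matching is then just pairwise disjointness, and the families $\F_1, \F_{s}$ recover the two classical EMC extremizers up to re-parametrization. A complete proof therefore subsumes EMC and is beyond current techniques; the realistic target is the large-$d$ regime $d \ge c(s, t) \cdot k$, in the spirit of Erd\H{o}s's 1965 argument and the recent refinement by Frankl and Kupavskii~\cite{FK22}.

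The first step is to check that the hypothesis is preserved by shifting. For $1 \le i < j \le d$, the shift $\sigma_{ij}$ replaces each $S \in \F$ with $j \in S$, $i \notin S$ by $(S \setminus \{j\}) \cup \{i\}$ unless that set is already in $\F$. If $\sigma_{ij}(\F)$ contained $k$ sets $T_1, \ldots, T_k$ admitting a perfect $(s-t)$-matching with witnesses $R_l \subseteq T_l$, then a Hall-type augmenting argument based on Lemma~\ref{lemma:halls-theorem} should let us re-pick the $R_l$'s so that whenever $T_l$ came from shifting an $S_l \in \F$, $R_l$ avoids the newly inserted coordinate $i$; then $R_l \subseteq S_l$ and the pairs $(S_l, R_l)$ witness a perfect $(s-t)$-matching inside $\F$, a contradiction. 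This is the usual preservation lemma for Erd\H{o}s-type matching problems. Iterating, we may assume $\F$ is left-compressed, i.e., closed under coordinate-wise domination among size-$s$ sets.

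The second and most delicate step is to show, for left-compressed $\F$ and $d$ sufficiently large, that $\F \subseteq \F_i$ for some $i \in [s-t]$, which immediately gives the desired bound. The intuition is that compression pushes the ``mass'' of $\F$ towards small coordinates; if $\F$ is not contained in any $\F_i$, then enough sets of $\F$ should be ``spread out'' to support a greedy construction of $k$ sets $S_1, \ldots, S_k \in \F$ with pairwise disjoint size-$(s-t)$ kernels $T_l \subseteq S_l$. The construction proceeds iteratively: at step $l$, pick a candidate $S_l \in \F$ and use compression to replace its large coordinates by unused small coordinates outside $T_1 \cup \cdots \cup T_{l-1}$, turning the quantitative failure of $\F \subseteq \F_i$ into the feasibility of this replacement. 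This mirrors the corresponding structural analyses in Frankl's treatments of EMC and is where nearly all of the work lies.

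The main obstacle is exactly the intermediate regime $d = \Theta(sk)$: even in the $t = 0$ specialization, EMC has yielded only under the assumption $d \ge \frac{5}{3}sk$ with $k$ large~\cite{FK22}, despite decades of effort. An additional $t$-specific difficulty is that the extremizers $\F_1, \F_2, \ldots, \F_{s-t}$ form a whole spectrum rather than just two competitors, so the greedy construction must be balanced against every $\F_i$ simultaneously rather than only the extreme cases. A more attainable intermediate target is a relaxed bound $|\F| \le C(t) \cdot \max_i |\F_i|$ with $C(t)$ independent of $d$, $s$, and $k$; such a bound would already upgrade Theorem~\ref{thm:sparse-recovery} and Corollary~\ref{cor:expected-error} from the $ks^2 = O(d)$ regime to $ks = O(d)$, removing the extra factor of $s$ explicitly identified above.
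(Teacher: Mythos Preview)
This statement is explicitly a \emph{conjecture} in the paper; there is no proof to compare against. The paper presents it in the discussion section as an open problem, verifies that each $\F_i$ satisfies the hypothesis (so the bound, if true, is tight), records a brute-force check for all $d \le 8$, and in Appendix~\ref{sec:conjecture-corollary} works out the consequences that would follow from it. That is the full extent of what the paper offers.

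Your proposal is therefore not in competition with a proof in the paper; it is a roadmap toward one, and you are candid about this. Your key structural observation is correct: setting $t = 0$ collapses the hypothesis to ``no $k$ pairwise disjoint sets'' and the family of candidate extremizers $\{\F_i\}$ to the two classical EMC constructions (after the shift $k \mapsto k-1$), so a full proof of Conjecture~\ref{conj:size-bound} would in particular resolve the Erd\H{o}s Matching Conjecture. This alone explains why the paper leaves it open. Your identification of the shifting/compression framework, the large-$d$ regime as the realistic first target, and the intermediate regime $d = \Theta(sk)$ as the bottleneck all align with the state of the art that the paper itself cites.

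Two cautions on the sketch. First, the shift-stability step is genuinely more delicate here than for EMC: with $t > 0$ the witnesses $R_l$ are proper subsets of the $S_l$, and the usual ``at most one set contains $i$'' trick now applies to the $R_l$'s rather than the $S_l$'s, so un-shifting may require adjusting both a set and its witness simultaneously; your appeal to a Hall-type augmentation is plausible but would need to be written out carefully. Second, your step two (``$\F \subseteq \F_i$ for some $i$'') is stronger than what the conjecture asks and is already false for EMC in the transition range where two extremizers have the same size; one only expects $|\F| \le \max_i |\F_i|$, not containment. Your fallback target $|\F| \le C(t)\cdot\max_i|\F_i|$ is exactly the kind of relaxation that would suffice for the paper's purposes, as you note.
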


We first note that each $\F_i$ defined in Conjecture~\ref{conj:size-bound} is a valid choice of $\F$---no $k$ sets $S_1, \ldots, S_k \in \F_i$ can be perfectly $(s - t)$-matched. This is because for any $T_j \subseteq S_j$ of size $s - t$, $|T_j \cap [ik - 1]|$ is at least $|S_j \cap [ik - 1]| - t \ge i$. By a pigeonhole argument, $T_1, T_2, \ldots, T_k$ cannot be disjoint. Thus, if $\max|\F_i|$ is indeed a valid upper bound, it cannot be improved any further. The idea behind the construction of $\F_i$ is to restrict a certain number of degrees of freedom (namely, $t + i$) to a small subset (namely, $[ik - 1]$), and then allow the remaining elements to be freely chosen. We remark that constructions that resemble the above have been shown or conjectured to be optimal for many natural problems in extremal combinatorics, including those in the Erd\H{o}s-Ko-Rado theorem and the Erd\H{o}s Matching Conjecture. We have also verified Conjecture~\ref{conj:size-bound} for $d \le 8$, $s \in [d]$, $k \in \{2, 3, \ldots, d\}$ and $t \in \{0, 1, \ldots, s - 1\}$, either manually or via an exhaustive search.

The upper bound in Conjecture~\ref{conj:size-bound} can be relaxed to the following closed-form expression:
\[
    |\F| \le \left(\frac{eks}{d}\right)^{t+1}\cdot\binom{d}{s}.
\]
Then, repeating the proof of Theorem~\ref{thm:sparse-recovery} and Corollary~\ref{cor:expected-error} would give an $\tilde O(ks/d)$ bound whenever $ks \le c_0d$ for some sufficiently small constant $c_0 > 0$. Details can be found in Appendix~\ref{sec:conjecture-corollary}.

\subsection{The $ks \gg d$ Regime?}
Our results for recovering the subspace and the individual data points crucially rely on the assumption that $ks = O(d)$. For subspace recovery, this assumption guarantees that when restricting the problem to any $O(k)$ coordinates, a majority of data points will be entirely clean, which allows the known robust estimators (under sample-level corruption) to achieve an accurate recovery.

Even when the subspace is given, when $ks \gg d$, the expected error bound in Corollary~\ref{cor:expected-error} reduces to $\tilde O(Bs \cdot (ks/d)^2)$, which is looser than the trivial upper bound of $O(Bs)$.\footnote{The optimum $x^*$ of BP must satisfy $\|x^* - \tilde x\|_1 \le \|x - \tilde x\|_1$, so $\|x^* - x\|_1 \le 2\|x - \tilde x\|_1 \le 2Bs$.} This seems to be an inherent limitation of the current approach. When $ks \gg d$, our constraint on $\F_t^{(s)}$ in Lemma~\ref{lemma:comb-constraint} is vacuous even when $t = s$: Even if $\F_t^{(s)} = \binom{[d]}{s}$, we still cannot find $k' = \Theta(k)$ sets that can be perfectly $(s - t')$-matched for $t' \le t/2 = s/2$, which is easily seen by the pigeonhole principle and that $k'(s-t') = \Omega(ks) \gg d$. Consequently, the current approach would fail to control the probability that BP incurs an $\Omega(s)$ error. On the other hand, natural choices for the subspace $U$ do not seem to give counterexamples in which $|\F_t^{(s)}|$ is close to $\binom{d}{s}$, even if $ks \gg d$. Therefore, obtaining non-trivial guarantees in the $ks \gg d$ regime \emph{might} still be possible based on the generalized version of RNP (Lemma~\ref{lemma:nec-cond}), though controlling the size of $\F_t^{(s)}$ will likely involve a quite different approach than the one based on diagonal dominance (in Lemma~\ref{lemma:comb-constraint}) explored in the present work.

\subsection{A Unified and More Natural Approach?}
Our solution to Problem~\ref{prob:mean-est} has two separate steps: first recovering the subspace that contains all the clean data, and then denoising each single data point with Basis Pursuit using this subspace information. Note that the distributional assumptions (i.e., Gaussianity) on the data are not used in the second step. It is then interesting to ask whether there is a more natural single-phase approach that achieves a similar performance guarantee.

For example, if we already know the low-rank covariance matrix $\Sigma$ (and thus the subspace $U$), and the goal is to estimate the mean $\mu$, the natural convex relaxation would be to set variables $\hat\mu \in \R^d$ and $y^{(1)}, y^{(2)}, \ldots, y^{(n)} \in U$, and minimize an appropriate combination of the following:
\begin{itemize}
    \item $\sum_{i=1}^{n}\|\tilde x^{(i)} - (\mu + y^{(i)})\|_1$, which is the $\ell_1$ surrogate for the sparsity.
    \item $\sum_{i=1}^{n}[y^{(i)}]^{\top}\Sigma^{-1}y^{(i)}$, which penalizes the noises $y^{(i)}$ that are unlikely to occur.
\end{itemize}

Taking one more step back, without directly modeling the Gaussianity of the data, does the ``vanilla'' robust PCA algorithm succeed for the setups in this paper? The near optimality of BP proved in our work might bring some optimism that the natural convex relaxation approach guarantees not only exact recovery under structural assumptions but also reasonable error bounds under the randomized location model.

\bibliographystyle{alpha}
\bibliography{main}

\appendix

\section{An Expected Error Bound}\label{sec:expcted-error}
\begin{proof}[Proof of Corollary~\ref{cor:expected-error}]
    We first reduce to the $B = 1$ case. If $x^*$ is an optimum of the BP program~\eqref{eq:basis-pursuit}, $x^* / B$ would be an optimum of $\|\hat x - \tilde x / B\|_1$ subject to $\hat x \in U$, since the two programs are equivalent. The latter is exactly the BP program for recovering $x / B \in U$ from the corrupted version $\tilde x / B$ with corruptions bounded by $1$. Thus, any upper bound on the expected error in the $B = 1$ case (namely, $\Ex{}{\|x^* / B - x / B\|_1}$) would give an upper bound on $\Ex{}{\|x^* - x\|_1}$ after multiplication by $B$.

    Fix $B = 1$, the subspace $U$ and the clean data point $x$. Let random variable $X$ denote the $\ell_1$ error incurred by the BP method, over the randomness in the support of the corruption. Let $t_0 \ge 0$ be an integer to be chosen later. Since $X$ is nonnegative, we have
    \begin{align*}
        \Ex{}{X}
    &=  \int_{0}^{4t_0}\pr{}{X \ge t}\rmd t + \int_{4t_0}^{+\infty}\pr{}{X \ge t}\rmd t\\
    &\le 4t_0\cdot\pr{}{X > 0} + \sum_{t=4t_0}^{+\infty}\pr{}{X \ge t}\\
    &\le 4t_0\cdot\frac{24ks}{d} + \sum_{t=4t_0}^{+\infty}\frac{1}{(\lfloor t/4\rfloor + 1)!}\cdot\left(\frac{12s^2k}{d}\right)^{\lfloor t/4\rfloor + 1} \tag{Theorem~\ref{thm:sparse-recovery}}\\
    &=  t_0\cdot\frac{96ks}{d} + 4\sum_{t=t_0 + 1}^{+\infty}\frac{1}{t!}\cdot\left(\frac{12s^2k}{d}\right)^{t}\\
    &\le t_0\cdot\frac{96ks}{d} + 4\sum_{t=t_0 + 1}^{+\infty}\left(\frac{12es^2k}{dt}\right)^{t},
    \end{align*}
    where the last step applies Stirling's approximation $n! > \sqrt{2\pi n}(n/e)^n > (n/e)^n$.
    
    We pick $t_0 = O(ks^2/d + \log d) = \tilde O(\max\{ks^2/d, 1\})$ such that $\frac{12eks^2}{dt_0} \le 1/2$ and $2^{-t_0} \le 1/d$. We get
    \[
        \Ex{}{X}
    \le t_0\cdot\frac{96ks}{d} + 4\sum_{t=t_0+1}^{+\infty}2^{-t}
    =   t_0 \cdot O\left(\frac{ks}{d}\right) + O\left(\frac{1}{d}\right)
    =   \tilde O\left(\frac{ks}{d}\cdot\max\left\{\frac{ks^2}{d}, 1\right\}\right).
    \]

    Finally, as discussed earlier, multiplying the bounds by $B$ gives the error bounds for the general $B$ case.
\end{proof}

\begin{remark}
While the proof above only uses the first two bounds in Theorem~\ref{thm:sparse-recovery}, the third bound is not implied by the first two. For instance, when $s = \sqrt{d/2}$ and $t = 48k$, the three bounds in Theorem~\ref{thm:sparse-recovery} get simplified to
\[
    \frac{(6k)^{12k+1}}{(12k+1)!},\quad\quad \frac{12\sqrt{2} \cdot k}{\sqrt{d}}, \quad\quad\frac{24k}{d},
\]
respectively. Then, for sufficiently large $d$ and $k = o(\log d)$, the last bound is strictly sharper than the first two.
\end{remark}

\section{Consequences of Conjecture~\ref{conj:size-bound}}\label{sec:conjecture-corollary}
We show that Conjecture~\ref{conj:size-bound} implies that the error bounds in Corollary~\ref{cor:expected-error} and Theorem~\ref{thm:mean-estimation} can be improved to $\tilde O(Bks/d)$ under the weaker assumption of $ks = O(d)$ (rather than $ks^2 = O(d)$).

Recall that $\F_i$ is defined as $\left\{S \in \binom{[d]}{s}: |S \cap [ik-1]| \ge t + i\right\}$. Thus, every set in $\F_i$ can be written as $A \cup B$, where $A$ is a size-$(t + i)$ subset of $[ik - 1]$ and $B$ is a size-$(s - t - i)$ subset of $[d]\setminus A$. This implies
\[
    |\F_i|
\le \binom{ik-1}{t+i}\binom{d-t-i}{s-t-i},
\]
and it follows that
\[
    \frac{|\F_i|}{\binom{d}{s}}
\le \binom{ik}{t+i}\cdot\frac{s}{d}\cdot\frac{s-1}{d-1}\cdots\frac{s-t-i+1}{d-t-i+1}
\le \frac{(ik)^{t+i}}{(t+i)!}\cdot\left(\frac{s}{d}\right)^{t+i}.
\]
Applying Stirling's approximation $n! > \sqrt{2\pi n}(n/e)^n > (n/e)^n$ and $1 - x \le e^{-x}$ gives
\[
    \frac{|\F_i|}{\binom{d}{s}}
\le \frac{i^{t+i}}{(t+i)^{t+i}}\cdot\left(\frac{eks}{d}\right)^{t+i}
=   \left(1 - \frac{t}{t+i}\right)^{t+i}\cdot\left(\frac{eks}{d}\right)^{t+i}
\le e^{-t}\left(\frac{eks}{d}\right)^{t+i}.
\]
Assuming $ks \le e^{-1}\cdot d$, we get
\[
    \frac{|\F_i|}{\binom{d}{s}}
\le e^{-t}\left(\frac{eks}{d}\right)^{t+1}
=   e \cdot \left(\frac{ks}{d}\right)^{t+1}.
\]

Then, we repeat the proof of  the first bound in Theorem~\ref{thm:sparse-recovery}. Recall that $\F_t^{(s)}$ denotes the set
\[
    \left\{S \in \binom{[d]}{s}: \exists u \in U\text{ such that }\|u\|_1 = 1, \sum_{i \in S}|u_i|\cdot\1{|u_i|\le 1/t} \ge \frac{1}{2}\right\}.
\]
By Lemma~\ref{lemma:nec-cond}, $\pr{}{\|x^* - x\|_1 \ge t}$ is at most $|\F_t^{(s)}|/\binom{d}{s}$. Then, Lemma~\ref{lemma:comb-constraint} implies that for $k' = 12k+1$ and $t' = \lfloor t/4\rfloor$, $\F_t^{(s)}$ contains no $k'$ sets that can be perfectly $(s-t)$-matched. Then, using Conjecture~\ref{conj:size-bound} and our calculation above,
\[
    \pr{}{\|x^* - x\|_1 \ge t}
\le \max_{i \in [s - t']}\frac{|\F_i|}{\binom{d}{s}}
\le e\cdot\left(\frac{k's}{d}\right)^{t'+1}
=   e\cdot\left[\frac{(12k+1)\cdot s}{d}\right]^{\lfloor t/4\rfloor + 1}.
\]
Finally, repeating the calculation from Appendix~\ref{sec:expcted-error} shows that when $B = 1$,
\begin{align*}
    \Ex{}{\|x^* - x\|_1}
&=  \int_0^{+\infty}\pr{}{\|x^* - x\|_1 \ge t}~\rmd t\\
&\le\sum_{t=0}^{+\infty}\pr{}{\|x^* - x\|_1 \ge t}\\
&\le e\cdot\sum_{t=0}^{+\infty}\left[\frac{(12k+1)\cdot s}{d}\right]^{\lfloor t/4\rfloor + 1}
=   O\left(\frac{ks}{d}\right),
\end{align*}
where the last step holds as long as $\frac{(12k+1)\cdot s}{d} \le 1 - c$ for any constant $c > 0$, which is implied by $ks \le c_0 \cdot d$ for sufficiently small $c_0 > 0$. This implies an expected error of $O(Bks/d)$ for Problem~\ref{prob:sparse}. Repeating the proof of Theorem~\ref{thm:mean-estimation} gives an $\tilde O(Bks/d)$ error in expectation.

\end{document}